\begin{document}

\setcounter{page}{1}
\issue{}

\title{Activity Networks with Delays\\ An application to toxicity analysis}

\author{%
 Franck Delaplace \\
 Universit\'e d'Evry - Val d'Essonne, IBISC, France  \and
  Cinzia Di Giusto \corresponding \\
  Universit\'e C\^ote d'Azur, CNRS, I3S, France \and
  Jean-Louis Giavitto \\
   CNRS, IRCAM, UPMC - UMR 9912 STMS \\ INRIA MuTAnt team, Paris, France \and
  Hanna Klaudel \\
   Universit\'e d'Evry - Val d'Essonne, IBISC, France 
}

\address{cinzia.di-giusto@unice.fr}

\runninghead{F. Delaplace et al. }{Activity Networks with Delay}

\maketitle 

\begin{abstract}
      \andy, Activity Networks with Delays, is a discrete time framework aimed at the qualitative modelling of time-dependent activities. The modular and concise syntax  makes \andy suitable for an easy and natural modelling of  time-dependent biological systems (\ie regulatory pathways). 
     
      Activities involve entities playing the role of
      activators, inhibitors or products of biochemical network operation. 
      Activities may have given duration, \ie the time required to obtain results. 
      An entity  may represent an object (\eg an agent, a biochemical species or a family of thereof) with a local attribute, a state denoting  its level (\eg concentration, strength). 
      Entities levels may change as a result of an activity  or may decay gradually as time passes by. 
      
      The semantics of \andy is formally given via  high-level Petri nets ensuring this way some modularity.
      As main results we show that \andy systems have finite state representations even for potentially infinite processes and it well adapts to the modelling of toxic behaviours. As an illustration, we present a classification of toxicity properties and give some hints on how they can be verified with existing tools on \andy systems. 
      A small case study on blood glucose regulation is provided to exemplify the \andy framework  and the toxicity properties.
      
\end{abstract}

 \begin{keywords}
Petri Nets, Toxicity, Reaction networks
\end{keywords}

\section{Introduction}

Activities and their causal relationships are a central concern in biological systems such as biological networks and regulatory pathways.
They deal with questions such as,  how long does it take to complete an activity, when it can be performed,  
 or whether it can be delayed or even ignored. 
Our proposal stems from reaction systems 
\cite{DBLP:journals/ijfcs/BrijderEMR11}, a formalism based on  \emph{reactions},
each defined as a triple $(R, I, P)$ with $R$  set of
\emph{reactants}, $I$  set of \emph{inhibitors} and $P$  set of
\emph{products}, and $R,I$ and $P$  taken from a common set of
\emph{species}. Reaction systems are based on three basic assumptions:
\begin{enumerate}[(i)]
 \item \label{item:react} 
 a reaction can take place only if all the reactants involved are available but none of the inhibitors are; 
 \item \label{item:quantity}  if a species is available then a sufficient amount of it is necessary to trigger a reaction; 
 \item  species are not persistent: they become unavailable if they are not sustained by a reaction.
\end{enumerate}
We complement this model by adding several important features.
We allow a richer description of species states using attributes having potentially multiple values  and introduce timing aspects.  
More precisely, the resulting formalism, \emph{Activity Networks with Delays} (\andy), 
targets systems 
composed of various \emph{entities} (\ie species in reaction systems) that evolve by means of time-dependent activities.
Each entity is characterised by one attribute,
called \emph{level}, representing rates, activation/inhibition status or expression degrees
(\eg low, medium, high). 
\emph{Activities} are rules describing the evolution of systems, involving (as for reaction systems) three sets of entities:
\emph{activators}, which must be present, 
\emph{inhibitors}, which must be absent,  
and \emph{results} whose expression levels have to be modified (increased or decreased).
The introduction of time concerns both activities and entities. Activities have a duration and entities are subject
to a \emph{decay} process or aging that models the action of a
non-specified environment: \ie levels decrease with time
progression.
Another difference is in the semantics of activities: while in reaction systems a maximal concurrency model is considered, here we  
adopt two types of activities:  \emph{mandatory} and \emph{potential} ones. The former set of activities, once enabled, must take place altogether in the same time unit (in maximal concurrency) while the latter, non deterministically one at the time, may be performed or not.  
Following \cite{Wang04formalverification}, as in our modelling all entities have discrete states and share the same global clock, 
it is reasonable to work with discrete time constraints.\looseness=-1

 Our main objective is to provide theoretical foundations and underlying tools for the
 description and  the understanding of the mechanisms and of the
 structural properties underpinning biological interactions networks. In
 biology, ordinary differential equations (ODE) remain the
 predominant modelling methodology\cite{alon2006introduction}. Such models present however some
 drawbacks: they need a precise quantification of parameters, 
random interactions are difficult to model outside an averaged
 approach, %
 system openness and non-linearities make hopeless the existence of
 analytic solutions, %
 and numerical methods -- often the only sensible option -- are mainly
 descriptive and cannot form the basis for an explanatory theory. 
 On the other hand, many
 languages and tools  have been developed  to build and explore discrete qualitative
 models in various application domains. Such models are rough
 abstractions of real world processes but they make possible to unravel
 the entangled causal relationships between system's entities.
 More specifically in systems biology, formalisms like Petri
 nets~\cite{DBLP:journals/nc/BaldanCMS10}, Boolean networks~\cite{thomas1973boolean},
 process algebras~\cite{journals/tcsb/Cardelli05} 
or rewriting
 systems~\cite{giavitto04a} (to cite a few) have been used with
 undeniable success to understand the causal links between structure
 and behaviour in molecular networks.
  Nonetheless, the management of time is somewhat problematic in all of
 the previous approaches. 
 In general, timing aspects are either disregarded or handled at a primitive level,
 which leads to expressiveness problems for the modeller. 
 For example, modelling a synchronous evolution directly in Petri nets
 requires a coding that is not necessarily natural for a biologist.
 This is why here we propose  a qualitative discrete formalism for biological systems that 
 provides, in particular,  a direct and intuitive account of timing aspects natural in biology such as activity duration and decay. Such features are in particular essential in describing toxicity problems.
 Indeed toxicology~\cite{Waters2004} studies the adverse effects of the exposures to chemicals at various levels of living entities: organism, tissue, cell and intracellular molecular systems. During the last decade, the accumulation of genomic and post-genomic data together with the introduction of new technologies for gene analysis has opened the way to  \emph{toxicogenomics}. Toxicogenomics combines toxicology with
``Omics'' technologies\footnote{``Omics'' technologies are methodologies such as genomics, transcriptomics, proteomics and metabolomics.}
to study the \emph{mode-of-action} of toxicants or environmental stressors on biological systems. The mode-of-action is understood as the sequence of events from the absorption of chemicals to a toxic outcome.
Toxicogenomics potentially improves clinical diagnosis capabilities and facilitates the identification of potential toxicity  in drug discovery~\cite{Foster2007} or in the design of bio-synthetic entities~\cite{synthetic}. 
 Our formalism permits to model biological systems together with their stressors and discover (or highlight) the mode-of-action of toxicants, that is why  we complement our designing process with a general discussion on toxicity properties and how they can be tested against \andy systems. In this respect, we aim at providing a methodology rather than a precise technique of testing.

\paragraph{\bf Organisation of the paper.}  This paper is the extended version of BioPPN workshop paper \cite{CDGHKFD14}. The main difference with respect to \cite{CDGHKFD14} is the introduction of time aspects in \andy systems, a more mature taxonomy of toxicity properties and a prototype implementation of \andy networks using Snakes toolkit \cite{Pom-PNN-2008},  Snoopy~\cite{DBLP:conf/apn/HeinerHLRS12} and its related analysis tool Charlie \cite{HSW15}.

Section \ref{sec:reaction} introduces the principles behind \andy
networks
and gives a formal definition of \andy semantics  in terms of high-level Petri nets. It also states the main result of the paper  addressing the finiteness of state space.
Next, Section \ref{sec:toxic} discusses an axiomatisation of toxicity properties. An illustration of the model and its properties is given in Section \ref{sec:example} describing the assimilation of aspartame in  blood regulation in human body. Finally Sections \ref{sec:related} and \ref{sec:concl} discuss related works and conclude.

A web page with additional material is available at \cite{webpage}. It includes the implementations in Snakes and Snoopy and  a technical report adding a comparison with timed automata.


\section{Activity Networks with Delays}\label{sec:reaction}

In this section we give the syntax and   
semantics of \emph{Activity Networks with Delays} (\andy). An \andy is composed by a set of entities
\entities driven by a set of timed activities. The time is assumed to be  discrete and modelled by a tick event.

\paragraph{\bf Entities.} 
Each entity, $\entity \in \entities$, is associated to a finite number of levels $\setlev_{\entity}$ (from 0 to $\setlev_{\entity}-1$) that 
in general represent ordered expression degrees (\eg low, medium,
high) and refer to a change in the entity capability of action. A decay duration is associated to each level of an entity $e$ 
through the function $\life_\entity:[0 \mydots \setlev_\entity-1] \to \nat^+ \cup \{\omega \}$ 
allowing different duration's for different levels or  $\omega$ if  unbounded. 
More precisely, if $\life_\entity(i) = \omega$, then level $i$ is \emph{permanent} and it can be modified
only by an activity. If $\life_\entity(i) \neq \omega$, the presence of the entity
at level $i$ is \emph{transient}: once entity $\entity$ at level $i$ has passed $\life_\entity(i)$ time units its level will pass to $i-1$, \ie it decays or ages gradually
as time passes by.
We fix $\life_{\entity}(0) = \omega$ as no negative level is allowed. 
 In principle, this substitutes a set of unspecified activities accounting for the action of an underspecified   
environment that consumes entities.  
A state $\state$ of an \andy network assigns to each entity $\entity \in \entities$ a level $\eta \in [0..\setlev_{\entity}-1]$. The initial state $\state_0$ sets each entity $\entity$ to a given level $\level{\entity}$.

\paragraph{\bf Activities.}
The evolution of entities is driven by timed \emph{activities}  of the form:
 $$ \rho ::= \activ{A_{\rho}}{I_{\rho}}{\dur{\rho}}{R_{\rho}}$$  
where $\dur{\rho} \in \nat$ is the activity duration, $A_{\rho}$ (activators) and $I_{\rho}$
(inhibitors) are sets of pairs $(\entity, \level{\entity})$ with $\entity\in
\entities$ and $\level{\entity} \in [0 \mydots \setlev_\entity-1]$; $R_{\rho}$, the 
results, is a non empty set of pairs $(\entity, \pm n)$ where $\entity\in
\entities$ and $ \pm n \in
\mathbb{Z}$ is a positive or negative variation of the entity level $\level{\entity}$. 
Entities can appear at most once in each set $A_{\rho}$, $I_{\rho}$ and $R_{\rho}$, cf., Remark \ref{rem:def} below. 
We write $e \in A_{\rho}$ to denote  $(e, \cdot) \in A_{\rho}$, similarly for $I_{\rho}$ and $R_{\rho}$. We omit index $\rho$ if it is clear from the context.

 Duration $\dur{\rho}$ characterises the  number of ticks required for yielding change of
levels of results, $\dur{\rho} =0 $ denotes an instantaneous activity. The results of an activity are the increments or decrements  $\pm n$ of each entity level  in $R_{\rho}$ (up to the range of entity levels). 
An activity $\rho$  can take place only if it is \emph{enabled}, this depends on
 the activator and inhibitor
levels appearing in  $A_{\rho}$ and $I_{\rho}$. 
Nonetheless, observing only the current level of each entity is not enough to
decide if an activity is enabled: such approach would miss the handling
of decays and durations. 
\begin{definition}[Enabled activity]\label{def:enabledactivity}
An activity $\rho$ is
enabled (\ie may happen) if and only if:
\begin{itemize}
 \item for each entity $\entity_a$ of the activators set $A_{\rho}$ (\ie $(\entity_a, \level{a}) \in A_{\rho}$), $\entity_a$ is available at least at level $\level{a}$ for
  the whole duration $\dur{\rho}$;
 \item for each entity $\entity_i$ of the inhibitors set $I_{\rho}$ (\ie $(\entity_i, \level{i}) \in I_{\rho}$), $\entity_i$ is available at a level strictly inferior to $\level{i}$ for
  the whole duration $\dur{\rho}$;
\end{itemize}

\end{definition}

\begin{remark} \label{rem:def}
Notice that:
\begin{itemize}
\item If an entity is an activator (resp. an inhibitor) at level
      $\ell$ for some activity $\rho$, it is also an activator (resp. an
      inhibitor) for $\rho$ at all levels $l \ge \ell$ (resp. $l \le \ell$).

 \item  An entity may be both in $A$ and $R$, \eg as in an
      auto-catalytic production where reaction products themselves are activators for the reaction. Similarly, an
      entity may be both in $I$ and $R$: thus representing self-repressing activities.

\item An entity can appear in the same activity simultaneously as an
      activator and as an inhibitor but we require them to
      occur with different levels. For instance, the rule
      $$
      \activ{\{({\entity},\level{})\} \cup A}{\{({\entity}, \level{}')\} \cup I}{\dur{}}{R}
      \qquad  \text{where~}   \level{} < \level{}'
      $$
      requires that the activity takes place only if the level
      $\lev_{\entity}$ of ${\entity}$ belongs to the interval $\level{}
      \leq \lev_{\entity} < \level{}'$. In particular, if ${\entity}$
      has to be present in an activity exactly at level $\level{}$,
      ${\entity}$ should appear as an activator at level $\level{}$ and
      as inhibitor at level $\level{}' = \level{}+1$.

\item The set of activators and inhibitors $A \cup I $ is allowed to be
      empty. In this case, the activity is constantly enabled. This
      accounts for modelling an environment that  continuously
      sustains the production of an entity. In this case a duration
      $\dur{} \not= 0$ may be used to account for a start-up time.

\end{itemize}
 
\end{remark}

Guided by the case study and the biological applications, we assume that once an activity $\rho$ is triggered:

\begin{itemize}
 
\item it must wait at least $\dur{\rho}$ before being enabled again (a kind of  refractory period).
     
\item its activators and inhibitors are left unchanged.

\end{itemize}

These are non restrictive hypotheses and by slightly modifying Definition \ref{def:rs} one can obtain different semantics that could depend on the specific modelled scenario: \eg once enabled an activity $\rho$ can be triggered an unbounded number of times or  once per time unit; also one can easily choose a different policy to modify the level of activators and/or inhibitors.

It is important to note that when an activity is enabled, it is not
necessarily performed. The idea is to make a distinction between
two types of activities: 
\begin{enumerate}
\item \emph{\potential} activities, denoted by $\alpha$ in a set $\Reac$:
      if enabled, they \emph{may} occur in an interleaved way thus the result
      depends on the order they are performed. It may happen that no \potential activities occur even if they are enabled;
\item \emph{\obligatory} activities, denoted by $\beta$ in a set $\Syn$: if
      enabled  they \emph{must} occur within the same time unit. All enabled
      \obligatory activities occur simultaneously.
\end{enumerate}

As all \obligatory activities $\beta_i$ are triggered
simultaneously, all updates related to an entity $\entity$ are
collected and summed up before being applied to the current level of
$\entity$. 

\begin{definition}[\andy network and its semantics]\label{def:andy}
An \emph{\andy network} is a triple $(\entities, \Syn, \Reac)$ where
$\entities$ is the set of entities, $\Syn$ is the set of \obligatory
activities and $\Reac$ is the set of \potential ones.  
It evolves by triggering enabled activities in two separate phases implementing one evolution step:
\begin{enumerate}[Ph. 1]

\item \label{item:reaction} 
      In this phase (between two ticks) \textbf{\potential activities} that are
      enabled may be triggered. Their action occurs non-deterministically in an interleaved
      way. Any potential activity is triggered at most once in this phase. 

\item \label{item:clock} 
      This phase is the tick transition: the effect is that all  concerned entities must decay and  all enabled \textbf{\obligatory
        activities} are performed simultaneously.
      Notice that as activators and inhibitors are not ``consumed", there
      are no conflicts and so there is a unique way (per time unit) to
      execute all enabled \obligatory activities.
\end{enumerate}

\end{definition}

\begin{example}[Repressilator]\label{ex:rep}
The repressilator \cite{Elowitz} is a synthetic genetic regulatory
network in Escherichia Coli consisting of three genes ($lacI, tetR, cI$) connected in a feedback loop such
that each gene inhibits the next gene and is inhibited by the previous
one. The repressilator exhibits a stable oscillation with fixed time
period that is witnessed by the synthesis of a green fluorescent protein ($GFP$).
According to \cite{Elowitz}: \emph{``the resulting oscillation with typical periods of hours are slower than the cell division cycle so the state of the oscillator has to be transmitted from generation to generation''}.  

It may be modelled in \andy with five entities $\entities =\{lacI, tetR, cI, GFP, gen\}$.
The first three represent the three genes,  each with two levels denoting an \emph{on}/\emph{off} state where level
$1=on$ has decay time $2$. $GFP$ has two levels (\emph{on}/\emph{off}) with unbounded decay. 
Finally, the evolution of successive generations is represented by entity $gen$ with, let's say, seven levels  (from 0 to 6, with unbounded decay) to differentiate among generations.  
 
The behaviour is modelled  with \potential and \obligatory activities: \potential activities (on the left below) are used to represent the feedback loop between $lacI, tetR$ and $cI$. \Obligatory activities  (on the right) handle the synthesis of $GFP$ and show the evolution of generations.
\[
\begin{array}{ll}
 \alpha_1: \activ{\emptyset}{(lacI,1)}{2}{(tetR,+1)} \quad  &  \beta_1: \activ{(lacI,1)}{\emptyset}{0}{(GFP, +1)}  \\
 \alpha_2: \activ{\emptyset}{(tetR,1)}{2}{(cI,+1)} &  \beta_2: \activ{(lacI,0)}{(lacI,1)}{0}{(GFP,-1)}\\
 \alpha_3: \activ{\emptyset}{(cI,1)}{2}{(lacI,+1)} &  \beta_3: \activ{(gen,0)}{(gen, 6)}{1}{(gen, +1)}\\
 &  \beta_4: \activ{(gen,6)}{\emptyset}{1}{(gen, -6)} 
\end{array}
\]
In particular, the evolution of successive generations is circular modulo 7,  this allows us to have a finite number of levels for $gen$). Activities $\beta_3$ and $\beta_4$ describe such a behaviour. Notice in particular the use of inhibitor $(gen, 6)$ in $\beta_3$ that allows to apply the activity for all levels of $gen$ from 0 to 5 and to apply $\beta_4$ for level 6.

We now exhibit a possible execution scenario for the repressilator, the table below shows for each line the current level of each entity:
\[
\begin{tabular}{c|c|c|c|c|l}
 $lacI$~ & $tetR$ & $cI$& $GFP$ & $gen$ & description\\
   0    &  0     &    1 &   0   &   0   & ~initial state\\
   0    &  0     &    1 &   0   &   0   & ~after Ph. 2 (tick)\\
   0    &  0     &    1 &   0   &   0   & ~after Ph. 1 (no activity)\\
   0    &  0     &    1 &   0   &   1   &  ~after Ph. 2  (tick and activity $\beta_3$)  \\
   0    &  1     &   1  &   0   &   1   & ~after Ph. 1 (activity $\alpha_1$)\\
   0    &  1     &   1  &   0   &   1   & ~after Ph. 2 (tick)\\ 
   0    &  1     &   1  &   0   &   1   & ~after Ph. 1 (no activity)\\
   0    &  1     &   0  &   0   &   2   & ~after Ph. 2  (tick, decay of $cI$  and activity $\beta_3$)\\
   1    &  1     &   0  &   0   &   2   & ~after Ph. 1 (activity $\alpha_3$)\\    
   1    &  1     &   0  &   0   &   2   & ~after Ph. 2 (tick)\\ 
   1    &  1     &   0  &   0   &   2   & ~after Ph. 1 (no activity)\\ 
   1    &  0     &   0  &   1   &   3   & ~after Ph.~2 (tick, decay of $tetR$ and activities $\beta_1,\beta_3$)
\end{tabular}
\]

\hfill $\diamond$
\end{example}


\subsection{\bf Petri Net formalisation.}\label{sec:petri} 

The semantics of \andy networks is formalised in terms of high-level Petri nets. 
We recall here the  notations together with some elements of their
semantics  \cite{DBLP:series/eatcs/Jensen92}.

\begin{definition}\label{def:PTS} 
A {\bf high-level Petri net}  is a tuple $(P, T, F, L, M_0)$ where:
\begin{itemize}
 \item $P$ is the set of places, $T$ is the set of transitions, with $P \cap T = \emptyset$;
 \item $F \subseteq (P \times T) \cup (T \times P)$ is the set of arcs; 
 \item $L$ is the labelling  of $P \cup T\cup F$ defined as follows:
\renewcommand{\labelitemii}{$-$} 
\begin{itemize}
 \item $\forall p \in P$, $L(p) $ is a set of values (the type of $p$);
 \item $\forall t \in T$, $L(t)$ is a  computable Boolean expression (a guard);
 \item and $\forall f\in F$, $L(f)$  is a  tuple of variables or values.
\end{itemize}
 \item $M_0$ is the initial marking putting a multiset of values (tokens) in $L(p)$ in each   $p \in P$.\looseness=-1
\end{itemize} 
\end{definition}
The behaviour of high-level Petri nets is defined as usual: markings (states) are functions from places in $P$ to multisets of possibly structured tokens in $L(p)$. A transition $t \in T$ is \emph{enabled} at  marking $M$, if there exists a valuation $\sigma$ of all variables in the labelling of $t$  such that the guard $L(t)$ evaluates to true ($L_{\sigma}(t) = \mathsf{true}$) and there are enough tokens in all input places $p$ of $t$ to satisfy the corresponding input arcs, \ie $L_{\sigma}((p,t)) \in M(p) $ for each input place $p$ of $t$.
The firing of $t$ produces the marking $M'$:  
$\forall p \in P, M'(p) = M(p) - L_{\sigma}((p,t)) + L_{\sigma}((t,p))$
with $L_{\sigma}(f) = 0 $ if $f\notin F$, $-$ and $ +$ are  multiset  operators for removal and adding of one element, respectively. We denote it by $\firing{M}{t}{\sigma}{M'}$. 
Observe that for \andy we are considering a subclass of high-level Petri nets where for each pair place/transition there is either no connection or a loop (an input and an output arc). So, starting from an initial marking associating one token per place, the number of tokens (but not necessary their value) is maintained at each step of the net evolution.

As usual, in figures, places are represented as rounds, transitions as squares, 
and arcs as directed arrows.
By convention, primed version of variables (\eg $x'$) are used to annotate 
output arcs of transitions, their evaluation is  
possibly computed using unprimed 
variables (\eg $x$) appearing on input arcs. 
With an abuse of notation, singleton markings are denoted without brackets, 
the same is used in  arc annotations. 
Also we refer to valuated variables without effectively mentioning the valuation $\sigma$: \eg we say that the current value of variable $w$ is $w$ instead of $\sigma(w)$.
An example of firing is shown in  Figure \ref{fig:firing}.

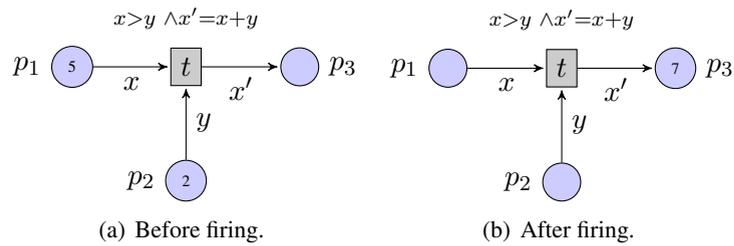
\begin{figure}[h]
\begin{center}
\subfigure[Before firing.\label{fig1}]{
\begin{tikzpicture}[node distance=1.5cm,>=stealth',bend angle=45,auto]

  \node [place, label = left: $p_1$] at (-1.5,0) (Q1){\tiny{5}};
  \node [place, label=left:$p_2$]at (0,-1.5)  (Q2){\tiny{2}};
  \node [place, label = right: $p_3$] at (1.5,0) (Q3){};
  
  \node [transition] (t) [  label=above : {$\begin{array}{c}
                                                {\scriptstyle x>y ~\wedge x'=x+y}
                                               \end{array}$}] at (0,0) {$t$}
   edge [pre]    node[below] {$x$}    (Q1)
   edge [pre]    node[right] {$y$}    (Q2)
   edge [post]   node[below]{$x'$}    (Q3);
\end{tikzpicture}}
\subfigure[After firing.\label{fig2}]{
\begin{tikzpicture}[node distance=1.5cm,>=stealth',bend angle=45,auto]

  \node [place, label = left: $p_1$] at (-1.5,0) (Q1){};
  \node [place, label=left:$p_2$]at (0,-1.5)  (Q2){};
  \node [place, label = right: $p_3$] at (1.5,0) (Q3){\tiny{7}};
  
  \node [transition] (t) [  label=above:{ $\begin{array}{c}
                                                {\scriptstyle x>y ~\wedge x'=x+y}
                                               \end{array}$}] at (0,0) {$t$}
   edge [pre]    node[below] {$x$}    (Q1)
   edge [pre]    node[right] {$y$}    (Q2)
   edge [post]   node[below]{$x'$}    (Q3);
\end{tikzpicture}}
\end{center}
\caption{\label{fig:firing}Example of firing of transition $t$ with $\sigma = \{ x= 5, y=2, x'=7\}$.}
\end{figure}

We say that a marking $M$ is \emph{reachable} from the initial marking 
$M_0$ if there exists a firing sequence $(t_1, \sigma_1), \mydots, (t_n, \sigma_n)$ such that 
 $\firing{M_0}{t_1}{\sigma_1}{M_1} \mydots \firing{M_{n-1}}{t_n}{\sigma_n}{M}$.
The semantics of  an initially marked Petri net is a marking graph (transition system) 
comprising all the reachable markings.

\paragraph{\bf \andy's formalisation.} 
 \andy syntax and semantics of Definition \ref{def:andy} are compact and intelligible, thus easily usable in practice as shown in Example \ref{ex:rep}.  
However, time and the interplay between entity levels, decay and activity durations require  a complex machinery to realize the expected behaviour into high-level Petri nets. Nevertheless, the advantage is that this realization is completely transparent to the final user.

In order to properly describe decay and the semantics of activities we need to record three kinds of information:
\begin{enumerate}
 \item (decay) since how long an entity is at the current level since the last update,
 \item (activators) since how long an entity is available at a level less or equal to the current one (recall that levels are inclusive),  
 \item (inhibitors) since how long a level greater than the current one has been left.
\end{enumerate}
Notice that duration relative to Items 1 and 2 for the current level can be different: \ie  if an entity is continuously sustained
by some activities it remains available in the system at the current
level for a period that may be longer than the corresponding decay
time. Thus  Item 1 reports the duration since the \emph{last} update (\ie when the activity concerning the entity has been performed) and Item 2 is the duration since the \emph{first} appearance of the entity at that level.

For this reason, we model each entity $\entity \in \entities$ by a single Petri net place $p_{\entity}$ carrying  tuples
$\tuple{\bool_\entity, \refr_\entity, \birth_\entity}$ as tokens where
$\bool_\entity$ is  the current level of $\entity$; 
$\refr_\entity$ is a counter storing the duration spent in the current level since the last update (Item 1): $0 \leq \refr_\entity \leq \life_\entity(\bool_\entity)$; 
and $\birth_\entity$ is a tuple of counters with $\setlev_\entity$ fields (one for
each level).  Each counter $\birth_{\entity}[i]$ contains a
duration interpreted as follows:
$$
\birth_{\entity}[i] = 
\begin{cases}
\begin{minipage}{7cm}
  since how long $\entity$ has reached level  $i$ (Item 2)
\end{minipage}
 &  \text{for }  0 \leq i \leq \lev_{\entity}\\
\\
\begin{minipage}{7cm}
   since how long $\entity$ has left level $i$ (Item 3)\\(or $0$ if $\entity$ has not yet reached $i$).
\end{minipage} 
& \text{for } \lev_{\entity} < i \leq \setlev_{\entity}-1              
\end{cases}
$$

Each place is initialised to $\tuple{\level{\entity}, 0,
  0^{\setlev_\entity} }$ where $\level{\entity}$ is the given initial
level of $\entity$ and $0^{\setlev_\entity}$ denotes vector $\birth_\entity$ of counters
uniformly initialised with zero.
 
It is worth observing that once a duration in $\birth$ has reached the maximum duration for all activities ($D = \max\{\Delta_\rho \mid \rho \in \Syn \cup \Reac \}$) it is useless to increment it, as for all duration greater than $D$, the guards of all activities involving the entity are satisfied.

\begin{newnotation}
      $\birth\sub{x}{[l \mydots l + n ]}$ is the systematic update to $x$ of
      values in counters $\birth[l] \mydots \birth[l+n]$. 
      We denote by $\inc{D}{\birth}$ the increment by one of all values in $\birth$:
      \ie $\forall k \in [0 \mydots \setlev_\entity-1]$, $\birth[k] =
      \min(\birth[k]+1, D)$ with $D \in \nat$.
\end{newnotation}

Every \potential activity $\alpha \in \Reac$ is modelled with  a transition $t_{\alpha}$ 
and a special place $p_{\alpha}$ that  ensures that the same activity is not executed more than once every $\dur{\alpha}$ time units, 
complying to Phase Ph. \ref{item:reaction} in Definition \ref{def:andy}. Figure \ref{fig:reaction} details the scheme of the resulting Petri net and defines the arc annotations. 
Input and output arcs between the same place and transition with the same label (read arcs) are denoted with a double-pointed arrow with a single label.  We comment on the conditions (implementing requirements in Definition \ref{def:enabledactivity}) and results of firing of $t_{\alpha}$ corresponding to a \potential activity $\alpha= \activ{A}{I}{\dur{}}{R}$. We have:
\begin{itemize}
\item since each activity must wait at least $\dur{}$ before being enabled again, place $p_{\alpha}$ retains whether the activity can be performed $w_{\alpha}\geq\dur{}$ and after firing the token is set to  $0$, ($w'_{\alpha}=0$).
 \item each activator $a \in A$ has to be present at least at  level $\level{a}$ for at least $\dur{}$ time units, this is expressed by  guard $\lev_a \geq \level{a} \wedge \birth[\level{a}] \geq \dur{} $; 
 \item each inhibitor $i\in I$ has not to exceed  level $\level{i}$ for at least $\dur{}$ time units,  this is guaranteed by guard 
$\lev_i < \level{i} \wedge \birth[\level{i}] \geq \dur{}$;
\item the updates on a place $p_r$ of a result entity $r$ at level $\lev_r$ containing token  $\tuple{\lev_r, \refr_r, \birth_r}$,  (\ie $(r,\pm n) \in R$ with $n\in \nat$) 
are performed\footnote{For the sake of simplicity, in the explanation we consider updates that do not exceed allowed boundaries, these cases are handled as expected in Definition \ref{def:rs}.} as follows: 
\begin{description}
 \item[Increase $+n$:]  
   $\tuple{\lev_r + n, 0, \birth_r\sub{0}{[\lev_r + 1\mydots \lev_r + n]}}$.
  $\lev_r$ becomes $\lev_r +n$, $\refr_r$ is reset to 0 restarting the counter  for level $\lev_r +n$ and 
$\birth_r$ is updated to record the change to the new level by resetting the counters $\birth_r[i]$ for all levels $i$ from $\lev_r + 1$ to $\lev_r + n$.
 
 \item[Maintenance $0$:]  $\tuple{\lev_r, 0, \birth_r}$: the current level is recharged resetting $\refr_r$ to 0.

  \item[Decrease $-n$:]  
	    $\tuple{\lev_r - n, 0, \birth_r\sub{0}{[\lev_r-n+1 \mydots \lev_r] }}$. This case is symmetric to the increase one, except for the treatment of $\birth_r$ that is updated by resetting  counters   $\birth_r[i]$
for all levels $i$  from  $\lev_r - n + 1$ to $\lev_r$.
\end{description}

\end{itemize}

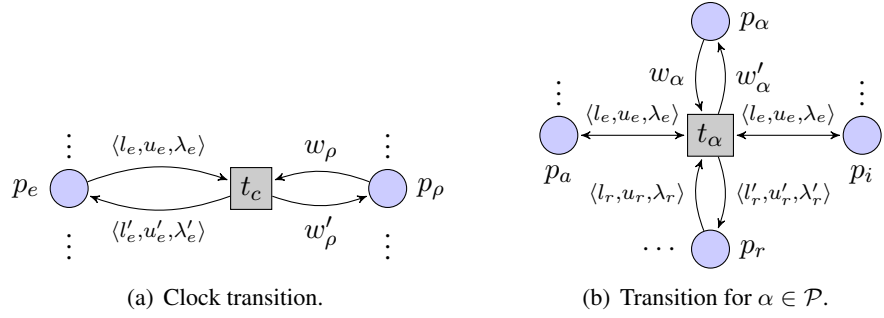
\begin{figure}[t]
\centering
\subfigure[Clock transition. \label{fig:clock}]{
\begin{tikzpicture}[xscale=1.2, yscale=1.2, node distance=1.5cm,>=stealth',bend angle=20,auto]

  \node [place, label = right: $p_{\rho}$, label = above: $\vdots$,label = below: $\vdots$] at (3.5,0) (Pc){};
  \node [place, label=left:$p_{\entity}$,label = above: $\vdots$,label = below: $\vdots$]at (0,0)  (R){};

  \node [transition] (tC) at (2,0) {$t_{c}$}
   edge [pre,bend left]    node[above] {$w_{\rho}$}    (Pc)
   edge [post,bend right]   node[below]{$w'_{\rho}$}    (Pc)
   edge [pre,bend right]    node[above] {${\scriptstyle \tuple{\bool_{\entity},\refr_{\entity},\birth_{\entity}}}$}    (R)
   edge [post,bend left]    node[auto] {${\scriptstyle \tuple{\bool_{\entity}',\refr_{\entity}',\birth_{\entity}'}}$}    (R);
  
\end{tikzpicture}
} \quad \quad 
\subfigure[Transition for $\alpha\in\Reac$. \label{fig:reaction}]{
\begin{tikzpicture}[node distance=1.5cm,>=stealth',bend angle=20,auto]

 \node [place, label = right: $p_r$,label = left: $\cdots$] at (0,-1.5) (P){};
 \node [place, label = below: $p_a$,label = above: $\vdots$] at (-2,0) (R){};
 \node [place, label = below: $p_i$,label = above: $\vdots$] at (2,0) (I){};
 \node [place, label = right: $p_{\alpha}$] at (0,1.5) (qw){};
 
 \node [transition] at (0,0) (tr) {$t_{\alpha}$}
   edge [pre and post]    node[above] {${\scriptstyle \tuple{\bool_{\entity},\refr_{\entity},\birth_{\entity}}}$}   (R)
   edge [pre and post]    node[above] {${\scriptstyle \tuple{\bool_{\entity},\refr_{\entity},\birth_{\entity}}}$}     (I)
   edge [pre, bend right]    node[left] {${\scriptstyle \tuple{\bool_r,\refr_r,\birth_r}}$}   (P)
   edge [post, bend left]   node[right] {${\scriptstyle \tuple{\lev'_r,\refr'_r,\birth_r'}}$}     (P)
   edge [pre, bend left]    node[left] {$w_{\alpha}$}   (qw)
   edge [post, bend right]   node[right] {$w'_{\alpha}$}     (qw);

\end{tikzpicture}
}
\caption{Scheme of Petri net modelling of \andy.}
\end{figure}

Finally, in order to cope with time aspects present in \andy, we introduce a tick transition $t_c$ (Figure \ref{fig:clock}) that represents the time progression.  Its effect is the one described by Phase Ph. \ref{item:clock} in Definition \ref{def:andy}: it increments  counters of entities,  takes care of the decay and simultaneously performs all enabled \obligatory activities. More precisely: 
      \begin{description}
      \item [Time: ] All counters in entities tuples are incremented by
            one: we pass from $\tuple{\lev, \refr, \birth}$ to
            $\tuple{\lev, \refr+1, \inc{D}{\birth}}$.  
            Counter $\refr$ remains unchanged for levels of unbounded
            duration.
      \item [Decay: ] An entity may stay at level $\lev$ for
            $\life(\lev)$ time units (\ie $\refr \leq \life(\lev)$).
            Decay happens as soon as the interval $\life(\lev)$ is
            elapsed and is obtained by decreasing the level by one, till
            reaching level zero.  
            We pass from
            $\tuple{\lev, \refr, \birth}$ to $\tuple{\lev - 1, 0,
              \inc{D}{\birth}\sub{0}{\lev}}$.
      \item [\Obligatory activities: ] All enabled \obligatory activities $\beta$ are performed simultaneously. The results on involved entities are collected and summed up. The derived update works as decribed above for \potential activities. 

      \end{description}

      Notice that transition $t_c$ is always enabled and Algorithm $\mathit{Calc}$ (cf., Algorithm \ref{fig:calc}) takes care of implementing the proper updates. The Algorithm is divided into  4 successive 
loops: the first one increments the counters in $p_{\rho}$ ($w'_{\rho} = w_{\rho}+1$) for all activities 
$\rho \in \Syn \cup \Reac$, the second one 
realizes decay for all entities, the third one checks whether a \obligatory activity $\beta$ is enabled in which case the 
involved entities are updated and the corresponding $w'_{\beta}$ is set to 0. The last loop combines the effect of decay and \obligatory activities and updates counters in $\birth$ accordingly.

Next definition puts together all the elements described so far:



\begin{definition}\label{def:rs}
 Given an  \andy network $(\entities, \Syn, \Reac)$ with initial 
state $(\entity, \level{\entity})$ for each $\entity \in \entities$, 
the high-level Petri net representation is defined as 
tuple $(P, T, F, L, M_0)$ where  $\lev$, $\lev'$, $\refr$, $\refr'$, $\birth$, 
$\birth'$, $w$, $w'$ range over variables and:
\begin{itemize} 
 \item $P = \{p_\entity \mid  \entity\in{\entities}\} \cup \{p_{\rho} \mid  \rho \in \Syn \cup \Reac \}; $
 \item $T= \{t_c \} \cup \{t_{\alpha} \mid \alpha \in \Reac \}$;

 \item $F= \{(p,t_c),(t_c, p) \mid p \in P \} \quad  \cup$ \\
 \hspace*{0.63cm} $ \{ (p_\entity,t_{\alpha}),(t_{\alpha}, p_\entity), (p_{\alpha},t_{\alpha}),
		(t_{\alpha}, p_{\alpha}) \mid \alpha \in \Reac, 
		\entity \in A_{\alpha} \cup I_{\alpha} \cup R_{\alpha} \}$

 \item Labels for places in $P$: 
\[
\begin{array}{lr}
  L(p_{\rho})  = [0 \mydots D] & \text{ for each } \rho \in \Syn \cup \Reac  \\
L(p_\entity) = [0\mydots\setlev_\entity-1] \times [0 \mydots d] \times  [0 \mydots D]^{\setlev_\entity}  
& \text{ for each } \entity\in\entities 
\end{array}
\]
with $d = \max\{\life_{\entity}(i)\mid i \in [0 \mydots \setlev_\entity-1]\}$ and $D = \max\{\dur{\rho}\mid \rho \in \Syn \cup \Reac\}$
\item Labels for arcs in $F$:
\[
\begin{array}{llr}
 L((p_{\rho}, t_c)) = w_{\rho} & L((t_c, p_{\rho} )) = w'_{\rho} & \text{ for each } \rho \in \Syn \cup \Reac\\
    L((p_\entity, t_c)) = \tuple{\bool_\entity, \refr_\entity, \birth_\entity} 
		& L((t_c, p_\entity)) = \tuple{\bool_\entity', \refr_\entity', \birth_\entity'} 
		& \text{ for each } \entity\in\entities
\end{array}
\]
For each \potential activity $\alpha\in\Reac$ and $\entity \in A_{\alpha} \cup I_{\alpha} \cup R_{\alpha}$:
\[
\begin{array}{llr}
      L((p_\entity, t_{\alpha})) = \tuple{\bool_\entity, \refr_\entity, \birth_\entity} \quad 
		& L((t_{\alpha}, p_\entity)) = \begin{cases}
                           \tuple{\bool_\entity, \refr_\entity, \birth_\entity} & \text{if } \entity\notin R_{\alpha}\\
                           \tuple{\bool_\entity', \refr_\entity', \birth_\entity'} & \text{otherwise}
                          \end{cases}\\

     L((p_{\alpha}, t_{\alpha})) = w_{\alpha} & L((t_{\alpha}, p_{\alpha} )) = w'_{\alpha}
\end{array}
\]

\begin{algorithm}[ht]
\caption{Calc’s algorithm}\label{fig:calc}
\begin{algorithmic}[0]
\Function{Calc}{}
 \ForAll{$\rho \in \Syn \cup \Reac$}
 \quad  $w_{\rho}'':=\max(w_{\rho}+1,D)$
 \EndFor
\ForAll{$\entity\in\entities$}
  \State $\lev_\entity'':=\lev_\entity $ \qquad $\refr_\entity'':=\refr_\entity$ \qquad $\birth_\entity'':=\inc{D}{\birth_\entity}$
  \If{ $\life(\lev_\entity) \neq \omega$ } \quad  $\refr_\entity'':= \refr_\entity + 1$
   \EndIf
\If {$ \refr_\entity > \life(\lev_\entity)$} \quad  $\lev_\entity'':= \max(0, \lev_\entity -1)$ \qquad $\refr_\entity'':= 0$
\EndIf
\EndFor

\ForAll{$\beta \in \Syn $}
 \If{
 $\forall {(a, \level{a}) \in A_{\beta}}(\lev_a \geq \level{a} ~ and~   \birth[\level{a}] \geq \dur{\beta} ) \quad   ~and$ \\
 \hspace{1.45cm}$\forall {(i, \level{i}) \in I_{\beta}}(\lev_i < \level{i} ~ and~  \birth[\level{i}] \geq \dur{\beta} )  \quad and \quad 
  (w_{\beta} \geq \dur{\beta}  )$ \ }
%
%
 \State $w''_{\beta} := 0$
   \ForAll{$ (r,v)\in R_{\beta}$} 
   \quad  $\lev_r'':= \lev_r'' +v $
   \qquad  $\refr_r'':= 0$
   \EndFor
\EndIf
\EndFor

\ForAll{$ \entity \in \entities$} 
 \State $\lev_e'':= \max(0, \min(\lev_e'', \setlev_e-1 ))$ 
 
 \If{$\lev_\entity'' < \lev_\entity$}
   \quad $  \birth_\entity':=\birth''_\entity\sub{0}{[\lev_\entity''+1\mydots \lev_\entity]}$
 \EndIf
 \If{$\lev_\entity'' > \lev_\entity$}
  \quad $\birth_\entity''=\birth_\entity''\sub{0}{[ \lev_\entity+1 \mydots \lev_\entity'']}$
 \EndIf
\EndFor
\State \textbf{return} $\bigwedge_{\rho \in \Syn \cup \Reac} (w'_{\rho}=w''_{\rho}) \wedge \bigwedge_{\entity\in\entities} (\tuple{\bool_\entity', \refr_\entity', \birth_\entity'} =\tuple{\bool_\entity'', \refr_\entity'', \birth_\entity''} )$ 
\EndFunction
\end{algorithmic}

\end{algorithm}

\item Labels for transitions in $T$:
 \[
 L(t_c) = \mathit{Calc}
 \]
where $\mathit{Calc}$ is the  guard computed 
by Algorithm  \ref{fig:calc}; when evaluated for a given valuation of net variables of input arcs of $t_c$, it produces the corresponding  updates of net variables of output arcs. 
 
\[
\begin{array}{lcl}
L(t_{\alpha}) \! &=&  w_{\alpha} \geq \dur{\alpha}  \wedge  w'_{\alpha} = 0 \quad \wedge \\
&& \bigwedge_{(a, \level{a}) \in A_{\alpha}}(\lev_a \geq \level{a} \wedge \birth[\level{a}] \geq \dur{\alpha} )  \quad \wedge\\
&&  \bigwedge_{(i, \level{i}) \in I_{\alpha}}(\lev_i < \level{i} \wedge \birth[\level{i}] \geq \dur{\alpha} )  \quad  \wedge \\

&& \bigwedge_{(r,+n) \in R_{\alpha}} C_{r+} \wedge \bigwedge_{(r,0) \in E_{\alpha}}C_{r0} \wedge  \bigwedge_{(r,-n) \in R_{\alpha}}C_{r-},\\  
\end{array}
\]
for each \potential activity $\alpha\in\Reac$ with
\[
\begin{array}{ll}
C_{r+} \!&:\!  \tuple{\lev'_r, \refr'_r, \birth'_r} = \tuple{\min(\lev_r +n, \setlev_r-1), 0, \birth_r\sub{0}{[\lev_r +1 \mydots \min(\lev_r +n, \setlev_r-1 ]}} \\
C_{r0} \!&: \! \tuple{\lev'_r, \refr'_r, \birth'_r} = \tuple{\lev_r, 0, \birth_r} \\

C_{r-} \!&: \! \tuple{\lev'_r, \refr'_r, \birth'_r} = \tuple{\max(0,\lev_r-n), 0, \birth_r\sub{0}{[\max(0,\lev_r-n)+1 \mydots \lev_r]}}.
\end{array}
\]

\item The initial marking is $M_0(p_{\entity})= \tuple{\level{s},0,0^{\setlev_\entity}}$ for each  $\entity\in \entities$ and $M_0(p_{\rho})= 0$ for 
$\rho\in \Syn \cup \Reac$.
\end{itemize}
\end{definition}

\begin{figure}[ht]
 \centering
 \includegraphics[width=0.65\textwidth]{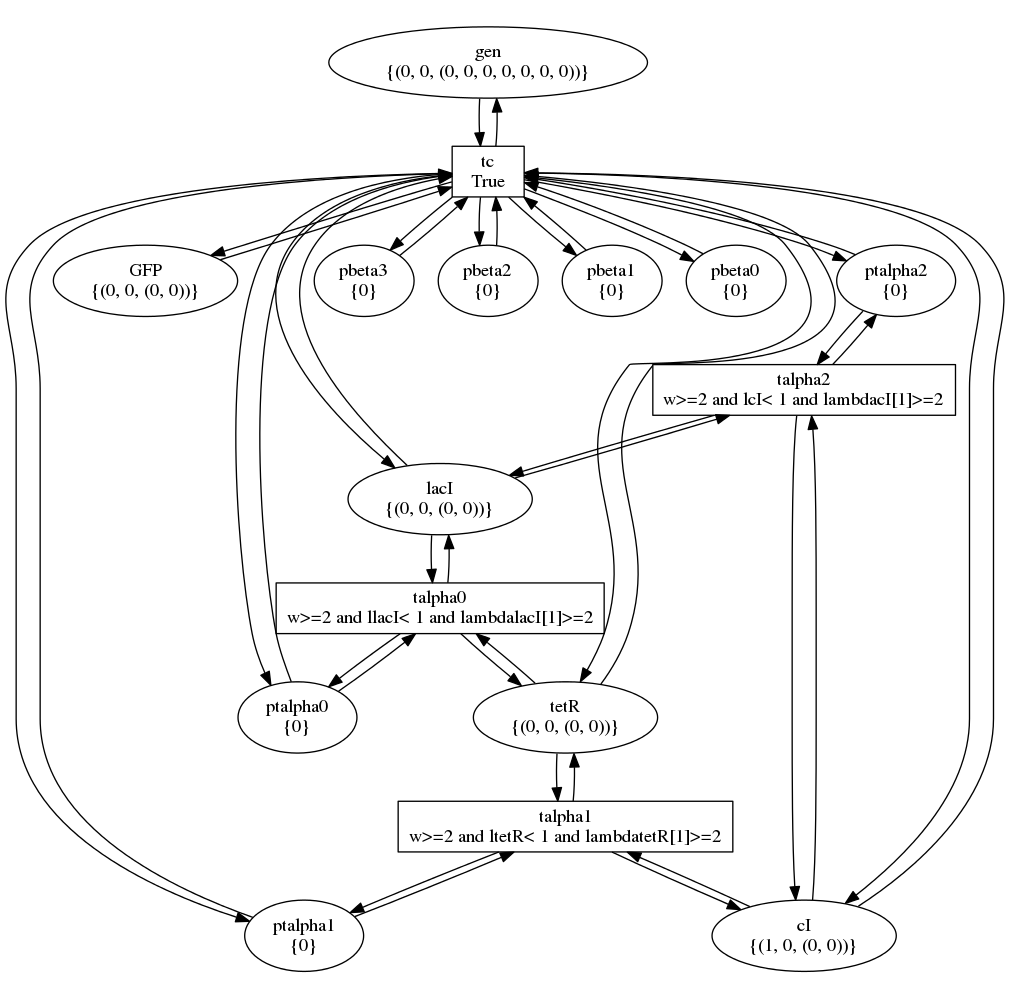}
\caption{The general shape of the \andy network for the Repressilator from Example  \ref{ex:rep} (arcs annotations are omitted).}\label{fig:repress}
\end{figure}

\begin{example}\label{ex:rep-petrinet}

Figure \ref{fig:repress} gives the simplified Petri net representation of the \andy system introduced in Example \ref{ex:rep},
obtained via the Snakes library.

\end{example}

\paragraph{\bf Expressiveness.}
The first result that we show for \andy networks concerns the finiteness of the state space. 
Actually, even if the number of species and levels is finite, as a state of the network includes an information concerning time (that could be unbounded, if one takes
a naive representation of dates), this could lead to an infinite state space. 
Nonetheless in our case, when the local counters have reached the maximal duration for activities $D$, 
no new behaviours can be triggered by the system. 
This immediately provides a way of abstracting without loosing precision and thus getting a symbolic representation. 

\begin{proposition}
The state space of  an \andy network $N$ is finite.
\end{proposition}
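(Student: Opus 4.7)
The plan is to bound the number of reachable markings of the net produced by Definition \ref{def:rs} by the cardinality of a finite product of token types. I would split this into three ingredients.

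First, I would show that every reachable marking assigns exactly one token per place. Inspecting $F$ in Definition \ref{def:rs}, every arc $(p,t)$ is accompanied by a reverse arc $(t,p)$, so the firing rule $M'(p) = M(p) - L_\sigma((p,t)) + L_\sigma((t,p))$ preserves, at every place, the number of tokens. Since $M_0$ places exactly one token in each place, every reachable marking does too, and is therefore completely determined by the value of one token per place.

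Second, I would bound the set of values each token can take, exploiting the fact that all counters are saturated at the global maximum duration. For an activity place $p_\rho$, the token value $w_\rho$ lives in the finite set $[0 \mydots D]$ with $D = \max\{\dur{\rho'} \mid \rho' \in \Syn \cup \Reac\}$: $\mathit{Calc}$ caps $w_\rho + 1$ at $D$ and each $t_\alpha$ resets it to $0$. For an entity place $p_\entity$, the token is a triple $\tuple{\bool_\entity, \refr_\entity, \birth_\entity}$ in $[0 \mydots \setlev_\entity - 1] \times [0 \mydots d] \times [0 \mydots D]^{\setlev_\entity}$, where $\setlev_\entity$ is finite by hypothesis, $\birth_\entity$ is componentwise capped by $\inc{D}{\cdot}$, and $\refr_\entity$ is bounded by $d = \max\{\life_\entity(i) \mid \life_\entity(i) \neq \omega\}$ (or $0$ if no such $i$ exists). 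For this last point, observe that $\refr_\entity$ is reset to $0$ at every change of $\lev_\entity$ (by decay or by any activity whose results include $\entity$) and, when $\life_\entity(\lev_\entity) = \omega$, is not incremented at all by $\mathit{Calc}$, hence it never leaves $[0 \mydots d]$.

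Putting the pieces together, the reachable state space is contained in the product
\[
(D+1)^{|\Syn \cup \Reac|} \cdot \prod_{\entity \in \entities} \setlev_\entity \cdot (d+1) \cdot (D+1)^{\setlev_\entity},
\]
a finite product of finite sets, hence finite. The step I expect to require the most attention is the bound on $\refr_\entity$ at permanent levels: one must verify, by inspecting $\mathit{Calc}$, that the conditional guarded by $\life_\entity(\lev_\entity) \neq \omega$ indeed freezes $\refr_\entity$, and that no other transition can push it above $d$. Everything else amounts to a routine traversal of Definition \ref{def:rs} checking that the saturating arithmetic keeps every value inside its declared type.
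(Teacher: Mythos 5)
Your proof is correct and follows exactly the same route as the paper's (one-line) argument: the paper simply observes that each place holds at most one token and that each place's type is finite, which are precisely your first two ingredients. You have merely unpacked the details — the token-conservation from the loop-shaped arcs, and the saturation of the counters at $D$ and $d$ — that the paper leaves implicit.
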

\begin{proof}
 Immediate as the type of each place is finite and each place can have at most one token. 
\end{proof}

An alternative formalization of \andy networks can be given using a timed automata model \cite{DBLP:journals/tcs/AlurD94}.  
 In Appendix \ref{app:timed} we provide an encoding that is compositional with respect to entities and \potential activities. This encoding is  exponential because of the nature of \obligatory activities that have to be treated simultaneously and dynamically.  We think that this encoding, although equivalent, is less elegant as it requires a lower level representation of the system. Indeed it corresponds to a sort of unfolding of the Petri net given in Definition \ref{def:rs}.

\section{Application to toxicology}\label{sec:toxic}
\andy and in particular the introduction of time constraints in conjunction with expression levels well adapts to study and explain toxic behaviours in biological systems. Here we explore the notion of toxicity and provide a methodology to analyse it in \andy.

The main approach used in toxicogenomics employs empirical analysis like in the identification of molecular biomarkers, \ie indicators of disease or toxicity in the form of specific gene expression patterns~\cite{DeCristofaro2008}.
Clearly, biomarkers remain observational indicators linking genes related measures to toxic states. In this proposal, we complement these empirical methods with a computational method  that aims at discovering the molecular mechanisms of toxicity.
This  way, instead of studying the  phenomenology of the toxic impacts, we focus on the processes triggering adverse effects on organisms.
Usually, the toxicity process  is defined as a sequence of physiological events that causes the abnormal behaviour of a living organism  with respect to its healthy state.

Healthy physiological states generally correspond to homoeostasis, namely a process that maintains a dynamic stability of internal conditions against changes in the external environment.
Hence, we will consider toxicity outcomes as deregulation of homoeostasis processes, namely deviation of some intrinsic referential equilibrium  of the system. 
Biological processes are usually given in terms of pathways which are  causal chains of the responses to stimuli, this way the deregulation of homoeostasis  appears as the unexpected activation or inhibition of existing pathways. Moreover,
in the context of toxicogenomics it is crucial to take into account at least two other parameters: the exposure time  and the thresholds dosage delimiting the ranges of safe and hazardous effects.  
Considering the qualitative nature of \andy models, we believe that the potential contribution lies in the analysis of the etiology of toxic processes. From this perspective, toxicity properties can be classified into two categories:
\begin{enumerate}
     \item \label{symptom} properties characterising pathological states or more generally associated to symptoms  
and
    \item \label{pathway} properties characterising sequences of states (traces), \eg non viable behaviours.
\end{enumerate}
The former class of properties basically leads to check the reachability of some states, while the latter may be used to unravel sequences of events leading to toxic outcomes. 

\andy allows us to describe such scenarios and the Petri net representation of a reaction network with the associated marking graph may be used to detect and predict toxic behaviours related to the dynamics of biological networks, both from the point of view of the dose (\eg some species exceed a pathological level) and from the point of view of the exposure (\eg some species persist in time beyond some maximal exposure time). Remark that defining toxicity by characterising properties on sequences of events is more general than just characterising pathological states. It addresses the needs to characterise multifactorial scenario (\eg periodical behaviours relating possibly several species) and to discover complex etiology among reactants. 

\paragraph{\bf From healthy states to healthy behaviours.}
In \andy a state is a marking in the marking graph, it collects for each species its current level and the corresponding $\lambda$ vector. The first step is to observe that some states can be naturally characterised as dangerous, they highlight particular symptoms (e.g. the dosage of a poison has become relevant). Let \danger be the set of such states. Symmetrically to dangerous states, there are some states that characterise a ``safe'' condition (e.g. the organism is functioning normally and does not present hazardous symptoms), denoted \healthy. Notice that there are states that are not assigned to any of the two classes, we group them into set  \remain.
Characterising harmful states corresponds only to address toxicity problems as in Item \ref{symptom}  above, more complex properties (Item \ref{pathway}) need to refer to (sets of) traces in the marking graph. We therefore classify possible behaviours starting from an initial state into four toxicity scenarios: 

\begin{enumerate}
\item those that lead to  states in \danger, 
\item those that irremediably quit \healthy and never reach it again: the tail of the trace visits only states in \danger or \remain (\eg when a gene mutation knock out some  pathway);

\item those that quit \healthy for too long in terms of ticks (even if they may reach it again), the regulation system allows, in principle, to reestablish the homeostatic equilibrium but the process takes too long, compromising the viability of the organism (\eg DNA damage caused by hydrogen peroxide in Escherichia Coli can be recovered to some extent by DNA repair enzymes but the perturbation may lead to cell apoptosis if to important or too long \cite{Imlay03061988});


\item those that quit \healthy an unbounded number of times (systemic metabolic dysfunction, slow poisoning) 

\end{enumerate}

All these properties can be given in terms of temporal logic formulae. As mentioned above,  the first item corresponds to reachability analysis thus instead of logic formulae ad hoc algorithms can be used as well. The second item can be dealt with formulae like the following in CTL: EF EG $ (s \notin \healthy)$, which means that eventually  (EF)  all visited states (EG) will not belong to set \healthy. The formulation of remaining items depends on the specific property of the system under consideration, for instance taking into account a precise number of ticks or a concrete sequence of states. 

\paragraph{\bf From healthy behaviours to healthy states.}
Notice that sometimes it is not sufficient or possible to describe the healthy condition of an organism (\ie the ability of maintaining its capabilities) only in terms of the sets \healthy and \danger. Indeed, for some organisms their viability conditions are characterised by complex behaviours such as specific traces or oscillations. We, thus, define toxic (and symmetrically viable) behaviours $\mathcal{T}_{toxic}$ (resp. $\mathcal{T}_{viable}$) by giving the property characterising them (or simply by enumerating the traces).  The properties can be described by specific temporal logic formulae (that can be rather complex employing also recursive operators). For instance, we can consider viable all traces that infinitely repeat an ordered sequence of three states  as in our example of the repressilator (\eg only lacL ON, then only tetR ON, and finally only cI ON)  and where the period of oscillation is six ticks. 
This way toxicity can be checked by testing whether a trace belongs or not to $\mathcal{T}_{toxic}$.  
For instance, suppose that we want to describe the periodic activation of three genes $a, b$ and $c$, we could use this $\mu$ calculus formula to describe $\mathcal{T}_{viable}$:
$$\nu.X (a \wedge \langle-\rangle\langle-\rangle\langle-\rangle tt \wedge [-](b \wedge [-](c \wedge [-]X )))$$
which means we are in a state with $a$ activated, followed by three transitions, for each first transition we should have $b$ activated and then $c$ and  so on recursively. 
In our case, as the state space is finite, any \andy network may be simulated by a B\"uchi automaton and the questions above may correspond either to check word acceptance or language inclusion, which are both decidable questions in our case.

Moreover once we have given the specification of viable and toxic behaviour, we may infer several interesting sets of states:
\begin{itemize}
\item  states starting from which all possible futures are behaviours in $\mathcal{T}_{viable/toxic}$ 
\item states that have at least one possible future in $\mathcal{T}_{viable/toxic}$,
\item states that are included only in $\mathcal{T}_{viable/toxic}$.
\end{itemize}

\begin{example}
We generalise here the example of the repressilator given above: we take three entities 
$\{A, B, C\}$ each with decay duration 2 that are inhibited in turn by potential activities:
\[
\begin{array}{l}
 \alpha_1: \activ{\emptyset}{(A,1)}{2}{(B,+1)} \qquad
 \alpha_2: \activ{\emptyset}{(B,1)}{2}{(C,+1)} \qquad 
 \alpha_3: \activ{\emptyset}{(C,1)}{2}{(A,+1)}  
\end{array}
\]
we consider a new perturbed variant of the system with an additional entity $D$ with only one level (\ie the entity is permanent) such that $\alpha_1$ and $\alpha_3$ remains unchanged and $\alpha_2$ is modified in  $$\alpha_2': \activ{(D,1)}{(B,1)}{4}{(C,+1)}$$ 

 \begin{figure}[ht]
 \centering
 \begin{tikzpicture}[>=latex',xscale=.4, yscale=.5]
 
 \draw[draw=none,fill=blue!10] (-1,1) rectangle (22,2) ; 
 \draw[draw=none,fill=blue!10] (-1,3) rectangle (22,4) ; 
 
 \node at (-1,1) (p0) {};
 \node[left] at (p0.west) {$0$};
 \node at (-1,2) (p1) {};
 \node[left] at (p1.west) {$1$};
 \node at (-1,3) (c0) {};
 \node[left] at (c0.west) {$0$};
 \node at (-1,4) (c1) {};
 \node[left] at (c1.west) {$1$};
 
  \node[label=left:{$A'=\left[ \phantom{\begin{array}{l} a\\ a \end{array}} \right.$}] at (-1,3.5) (c) {};
 \node[label=left:{$A=\left[ \phantom{\begin{array}{l} a\\ a \end{array}} \right.$}] at (-1,1.5) (p) {};
 
 
 \node at (-2,0) (t0) {};
 \node[rectangle,draw,fill=blue,scale=.3,label=below:0] at (0,0) (t1) {};
 \node[rectangle,draw,fill=blue,scale=.3,label=below:2] at (2,0) (t2) {};
 \node[rectangle,draw,fill=blue,scale=.3,label=below:4] at (4,0) (t3) {};
 \node[rectangle,draw,fill=blue,scale=.3,label=below:6] at (6,0) (t4) {};
 \node[rectangle,draw,fill=blue,scale=.3,label=below:8] at (8,0) (t5) {};
 \node[rectangle,draw,fill=blue,scale=.3,label=below:10] at (10,0) (t6) {};
 \node[rectangle,draw,fill=blue,scale=.3,label=below:11] at (12,0) (t7) {};
 \node[rectangle,draw,fill=blue,scale=.3,label=below:12] at (14,0) (t8) {};
 \node[rectangle,draw,fill=blue,scale=.3,label=below:13] at (16,0) (t9) {};
 \node[rectangle,draw,fill=blue,scale=.3,label=below:14] at (18,0) (t10) {};
 \node[rectangle,draw,fill=blue,scale=.3,label=below:16] at (20,0) (t11) {};
 
 \node[label=below right:$time$] at (22,0) (t12) {};
 
 \node[rectangle,draw,fill=blue,scale=.3] at (-1,1) (k1) {};
 \node[rectangle,draw,fill=blue,scale=.3] at (-1,2) (k2) {};
 \node[rectangle,draw,fill=blue,scale=.3] at (-1,3) (k3) {};
 \node[rectangle,draw,fill=blue,scale=.3] at (-1,4) (k4) {};
 
 \draw[->] (t0) -- (t12) ; 
 \draw[->] (-1,0) -- (-1,5.5) ; 
 
 \path[draw,thick] (0,2) --(2,2) --node[midway,above,sloped]{} (2,1) -- (6,1) --node[midway,above,sloped]{} (6,2) -- (8,2)  --node[midway,above,sloped]{} (8,1) -- (12,1) -- (12,2) -- (14,2) -- (14,1) -- (18,1) -- (18,2) -- (20,2) -- (20,1) -- (21,1);
 \node[right] at (21,1) {$\cdots$};
 \path[draw,thick] (0,4)--(2,4)  --node[midway,above,sloped]{} (2,3) -- (8,3)  --node[midway,above,sloped]{} (8,4) -- (10,4) -- (10,3) -- (16,3) -- (16,4) -- (18,4) -- (18,3) -- (21,3);
 \node[right] at (21,3) {$\cdots$};
 
 \end{tikzpicture}
 \caption{Entities levels evolving with time corresponding to the given scenario.}
 \label{fig:diagram}
 \end{figure}
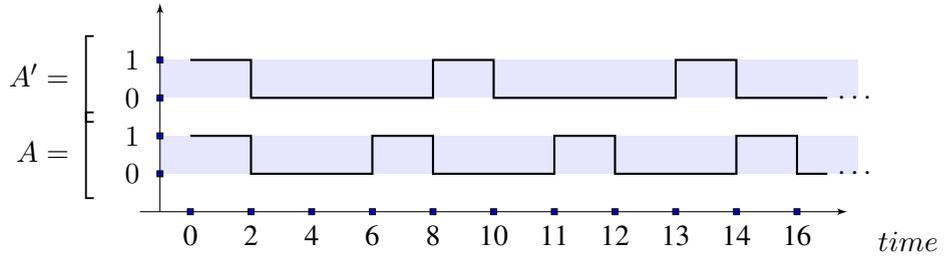

Figure \ref{fig:diagram} shows the periodic activation of entity $A$ (bottom line) each six time units while the second line shows the periodic behaviour of $A$ (top line, here denoted with $A'$ to avoid confusion) in the system perturbed with entity $D$.

We now define $\mathcal{T}_{viable}$ as all traces where entity $A$ is activated every six ticks, clearly with this definition of viability all   traces in the  perturbed system will be toxic.

\hfill$\diamond$
\end{example}

Notice that formulae describing properties can be rather long requiring, for instance, a complex combination of all possible cases and interleaving. To this aim, in the future we plan to design a language with dedicated primitives that abbreviate common use cases (\ie macro instructions).

\section{Case study: Blood glucose regulation}\label{sec:example}
Here we model \emph{glucose regulation} in human body (Figure \ref{fig:glucose}). This example shows that \andy model is not limited to genetic regulations but it may express other kinds of interactions. 
In the following, we are always referring to the process under normal circumstances in a healthy body.

Glucose regulation is a homeostatic process: \ie the rates of glucose in blood (\emph{glycemia}) must remain stable at what we call the equilibrium state. 
Glycemia is regulated by  two hormones: \emph{insulin} and  \emph{glucagon}. When glycemia rises (for instance as a result of the digestion of a meal), insulin  promotes the storing of glucose in muscles through the  glycogenesis process, thus decreasing the blood glucose levels. 
Conversely, when glycemia is critically low, glucagon  stimulates the process of glycogenolysis that increases the blood glucose level by transforming glycogen back into glucose.

 \begin{figure}[b]
\centering
\includegraphics[width=0.5\textwidth]{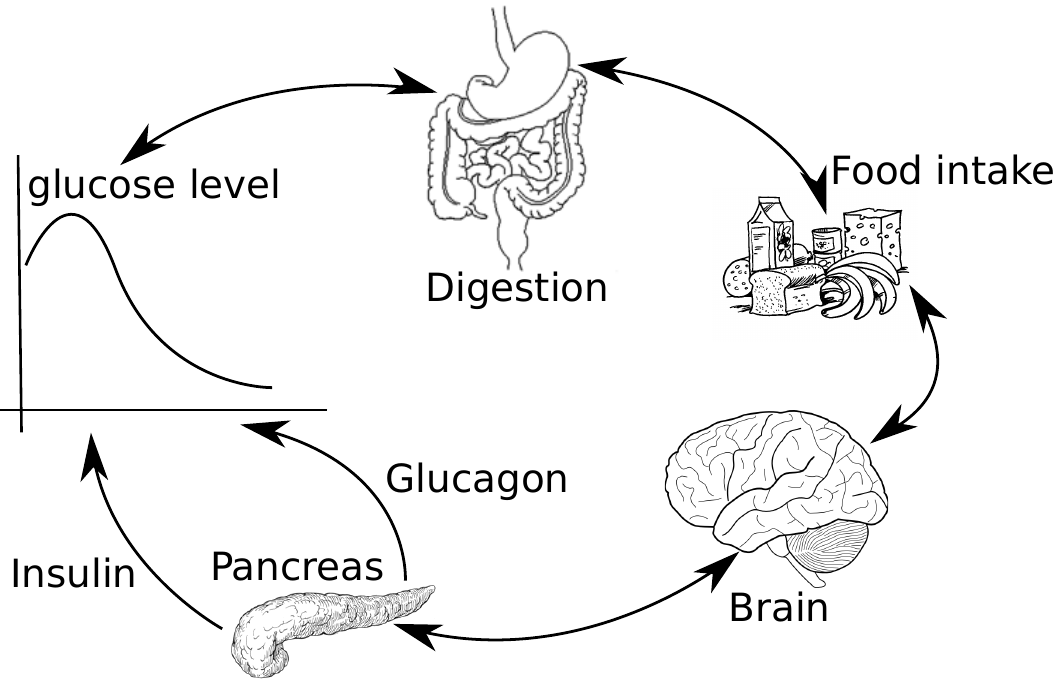} 
\caption{Glucose metabolism} \label{fig:glucose}
\end{figure}

We will focus on the assimilation of sweeteners: \ie sugars  or  artificial sweeteners such as aspartame. Whenever we eat something sweet either natural or artificial, the sweet sensation sends a signal to the brain (through \emph{neurotransmitters}) that in turns stimulates the production of insulin by pancreas. In the case of sugar, the digestion transforms food into nutrients (\ie glucose) that are absorbed by blood.
This way, sugar through digestion  increases glucose in blood giving the sensation of satiety. In case the income of glucose produces hyperglycemia,  the levels of glucose are promptly equilibrated by the intervention of insulin.
Unlike sugar, artificial sweeteners are not assimilated by the body, hence they do not increase the glucose levels in blood. Nevertheless  the insulin produced under the stimuli originated by the sweet sensation, although weak, can still cause the rate of glucose to drop engendering hypoglycemia. In response to that, the brain induces the  stimulus of \emph{hunger}. As a matter of fact this appears as an unwanted/toxic behaviour. Indeed the assimilation of food (even if it contains aspartame) should calm hunger and induce satiety not the opposite. 

This schema suggests that we should consider four  levels for glycemia: low, hunger, equilibrium and high. Likewise for insulin we assume three levels: inactive, low and high. All other actors involved in glucose regulation, have only two  levels (inactive or active).  
In this example, duration do not play a fundamental role, for the sake of simplicity we have set all complementary activities such as production of insulin and glucagon,  to take the same amount of time, the signal to the brain is the fastest, and the decay of  glycemia values are much longer than the digestion process. 

Thus the set of involved entities is $$\entities = \{ Sugar, Aspartame, Glycemia, Glucagon, Insulin\}$$ and their expression levels and  corresponding decays are:
$$ 
\begin{array}{|lcl|}
\hline
  \mbox{levels} & \hspace{2cm} & \mbox{duration} \\
	\hline
	\setlev_{sugar}=\{0,1\} && \life_{sugar}(1)=2 \\
	\hline
  \setlev_{aspartame}=\{0,1\} && \life_{aspartame}(1)=2 \\
	\hline
  \setlev_{glycemia}=\{0,1,2,3\} \quad && \life_{glycemia}(1)=8 \\
  && \life_{glycemia}(2)=8\\
  && \life_{glycemia}(3)=8\\
	\hline
  \setlev_{glucagon}=\{0,1\} && \life_{glucagon}(1)=3\\
	\hline
  \setlev_{insulin}=\{0,1,2\} && \life_{insulin}(1)=3\\
  && \life_{insulin}(2)=3\\
	\hline
\end{array}
$$
 The levels of glycemia are: 0 corresponding to low, 1 to hunger, 2 to equilibrium and 3 to high. Likewise for insulin we have 0 that corresponds to inactive, 1 to low and 2 to high. All levels for the other species are 0 for inactive and 1 for active.

The set of \potential activities $\Reac = \{\alpha_k=\activ{A_k}{I_k}{}{R_k} \mid k\in [1..9]\}$ for the glucose metabolism example  is:
$$\begin{array}{l l}
  \alpha_1: \ & \activ{(Sugar,1)}{\emptyset}{}{(Insulin,+1), (Glycemia,+1)} \\
 
	\alpha_2: & \activ{(Aspartame,1)}{ \emptyset }{}{(Insulin,+1)} \\
  
	\alpha_3 :& \activ{\emptyset}{(Glycemia,1)}{}{(Glucagon,+1)} \\

	\alpha_4 :& \activ{(Glycemia,3)}{ \emptyset}{}{ (Insulin,+1)} \\
  
	\alpha_5 :& \activ{(Insulin,2)}{ \emptyset}{}{(Glycemia,-1)} \\
 
	\alpha_6 :& \activ{(Insulin,1),(Glycemia,3)}{\emptyset}{}{(Glycemia,-1)} \\
 
	\alpha_7 :& \activ{(Insulin,1)}{  (Glycemia,2)}{}{(Glycemia,-1)} \\
  
	\alpha_{8}: & \activ{(Glucagon,1)} {\emptyset}{}{(Glycemia,+1)}  
   \end{array}
 $$
$\alpha_1$ and $\alpha_2$ represent the assimilation of Sugar and Aspartame, respectively: while Aspartame only increases the level of Insulin, Sugar also increases Glycemia.  $\alpha_3$ takes care of hypoglycemia, \ie a Glycemia level equal to 0 (obtained by using $(Glycemia, 1)$ as inhibitor) engenders the production of Glucagon. On the contrary, hyperglycemia causes the production of Insulin ($\alpha_4$). The presence of Insulin lowers Glycemia (activities $\alpha_5, \alpha_6, \alpha_7$). In particular Insulin level equal to 1 plays a role in the decrease of Glycemia only in case of hyperglycemia $\alpha_6$ or hypoglycemia $\alpha_7$, otherwise the signal is not strong enough and we need Insulin at level 2 to see the effect on Glycemia ($\alpha_5$). Last activity describes the role of Glucagon which if active increases the level of Glycemia.

For this example we consider an empty set of \obligatory activities. Observe now the behaviour of
$Glycemia$ in the following scenario:
$$
\begin{array}{ll}
 \text{initial state} & \tuple{3, 0}  \\
 \text{$8$ time units elapse, counter at level $3$ updates} & \tuple{3, 8} \\
 \text{one time unit elapses, $Glycemia$ decays} & \tuple{2, 0} \\
  \text{one time unit elapses, counter at level $2$ updates} & \tuple{2, 1}\\
  \text{activity $\alpha_5$ decreases $Glycemia$ level} & \tuple{1, 0} \\
  \text{$8$ time units elapse, counter at level $1$ updates} & \tuple{1, 8} \\
\text{one time unit elapses, $Glycemia$ decays} & \tuple{0, 0}\\
  \text{one time unit elapses, no effect since $\life_{glycemia}(0)=\omega$ \quad} & \tuple{0, 0}. 
\end{array}
$$

Figure \ref{fig:randyglucose} shows a simplified \andy network $(\entities, \Reac, \emptyset)$. The shown fragment focuses only on the activity schema linking inputs (\ie activators and inhibitors) to results. 
Each input arc is labeled with either  letter A or letter I denoting whether the input place is an activator or an inhibitor,  respectively. Likewise, each output arc is labeled with a + or a - to denote increase or decrease of product levels by 1.
For each activity transition $\alpha$, we have omitted place $q_{\alpha}$ and all arcs in the opposite direction. 
The numbers inside each transition refer to the corresponding activity.
Figure \ref{fig:zoomrandyglucose}, instead, shows a portion of the complete 
initially marked \andy network focusing only on activity $\alpha_7$. 

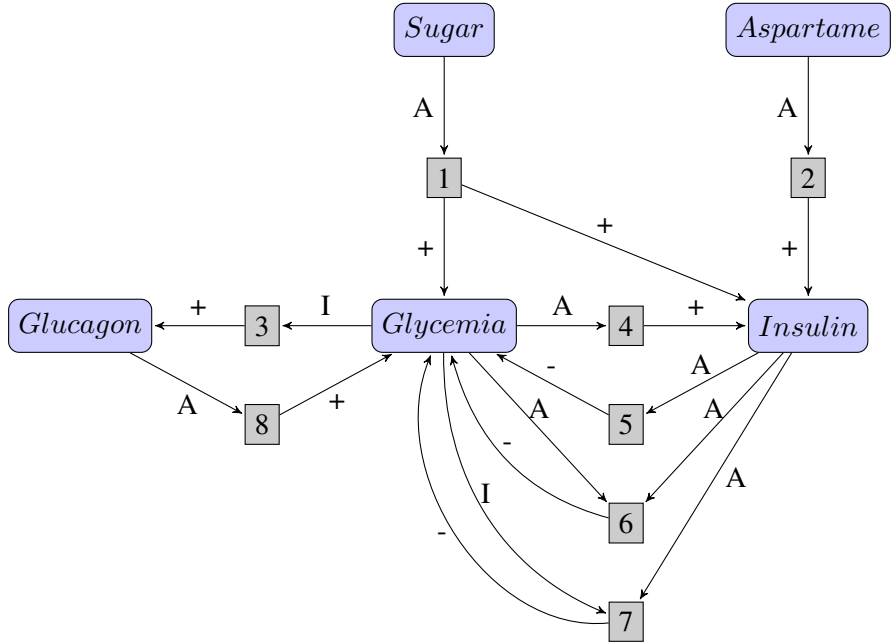
\begin{figure}[ht]
\centering
\begin{tikzpicture}[xscale=1.2, yscale=1.3, node distance=1.5cm,>=stealth',bend angle=45,auto]

\tikzstyle{place}=[rectangle,rounded corners,draw,fill=blue!20,minimum size=7mm]

 \node [place, label = above: ] at (2,7) (sugar){$Sugar$};
 \node [place, label = above: ] at (6,7) (asp){$Aspartame$};
 \node [place, label = right: ] at (2,4) (glucose){$Glycemia$};
 \node [place, label = right: ] at (-2,4) (gluc){$Glucagon$}; 
 \node [place, label = right: ] at (6,4) (ins){$Insulin$}; 

\node [transition, label={ right:}] at (2,5.5) (tr1) {1}
  edge [pre]    node[left] {A}    (sugar)
  edge [ post]    node[above] {+}   (ins)
  edge [ post]    node[left] {+}   (glucose);
  
\node [transition, label={ right:}] at (6,5.5) (tr2) {2}
  edge [pre]    node[left] {A}    (asp)
  edge [post]    node[left] {+}   (ins);

  \node [transition, label={ right:}] at (0,4) (tr3) {3}
  edge [pre]    node[above] {I}    (glucose)
  edge [ post]    node[above] {+}   (gluc);

\node [transition, label={right:}] at (4,4) (tr4) {4}
  edge [pre]    node[above] {A}    (glucose)
  edge [post]    node[above] {+}   (ins);

 \node [transition, label={right:}] at (4,3) (tr5) {5}
   edge [pre]    node[above] {A}    (ins)
   edge [post]    node[above] {-}   (glucose);
   
   \node [transition, label={ right:}] at (4,2) (tr6) {6}
   edge [pre]    node[above] {A}    (glucose)
   edge [pre]    node[above] {A}    (ins)
   edge [post, bend left=30]    node[above] {-}   (glucose);

\node [transition, label={ right:}] at (4,1) (tr7) {7}
  edge [pre, bend left=35]    node[above] {I}    (glucose)
  edge [pre]    node[right] {A}    (ins)

  edge [post, bend left =60]    node[left] {-}   (glucose);

 \node [transition, label={ right:}] at (0,3) (tr8) {8}
  edge [pre]    node[below] {A}    (gluc)
  edge [post]    node[below] {+}   (glucose);

\end{tikzpicture}
 \caption{Simplified \andy network of glucose metabolism. } \label{fig:randyglucose}
\end{figure}

\begin{figure}[ht]
\centering
\begin{tikzpicture}[node distance=1.5cm,>=stealth',bend angle=45,auto]

 \tikzstyle{place}=[circle,rounded corners, draw,fill=blue!30,minimum size=7mm]
 \tikzstyle{transition}=[rectangle,thick,draw=black!75,
 			 fill=black!20,minimum size=4mm]
 \tikzstyle{every token}=[font=\small]

 \node [place, label = above:$q_{glycemia}$ ] at (0,4) (gl){$(3,0)$}; 
 \node [place, label = below: $q_{insulin}$ ] at (0,0) (ins){$(0,0)$}; 
 \node [place, label = right:$q_{\alpha_7}$ ] at (6,2) (qw){1};
 
\node [transition, label={ left:$L(t_c)$} ] at (-3,2) (tr1) {$t_c$}
  edge [pre, bend left]    node[left] {\ $\quad \tuple{\lev_{g},\refr_{g}}$}    (gl)
  edge [post]    node[right] {$\tuple{\lev'_{g},\refr'_{g}}$}    (gl)
  edge [pre, bend right ]    node [left] { $\tuple{\lev_i,\refr_i}$ }    (ins)
  edge [post ]    node[right] {$\ \tuple{\lev'_i,\refr'_i}$ }    (ins);
  
\node [transition, label={ right:$L(t_{\alpha_7})$}] at (3,2) (tr8) {$t_{\alpha_7}$}
  edge [pre, bend left=30 ]    node[above] {\qquad \quad $\tuple{\lev_g,\refr_g}$}    (gl)
  edge [pre and post]    node[right] {$\ \tuple{\lev_i,\refr_i}$}    (ins)
  edge [post,bend right]    node[right] {
  $ \tuple{\lev'_{g},\refr'_{g}} $}   (gl)
    edge [pre, bend left]    node[below] {$w$}   (qw)
  edge [post,bend right]    node[above] {$0$}   (qw);
\end{tikzpicture}
 \caption{A portion of the \andy network of glucose metabolism with an initial marking. } \label{fig:zoomrandyglucose}
\end{figure}
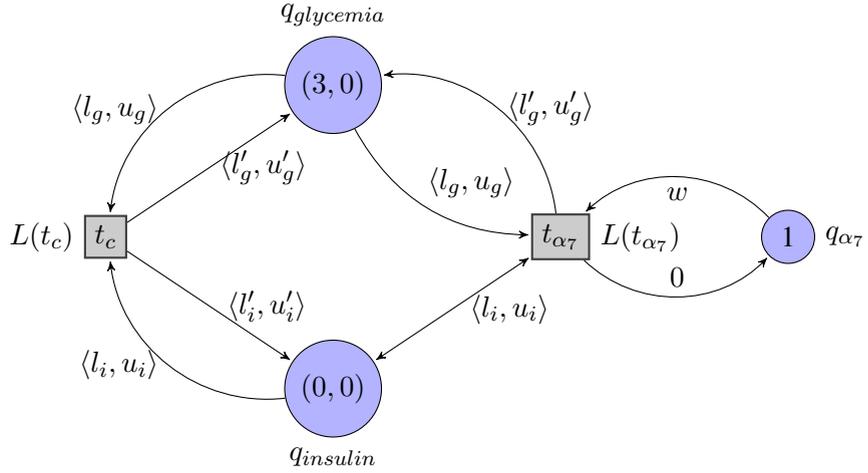

The obtained network allows to simulate and intuitively confirm expected behaviours. In addition  and as explained in previous section, we can perform exhaustive  checks and for instance automatically verify the following CTL properties:

\begin{description}
 \item[Symptoms:] Is it possible to have an anomalous decrease of glucose levels in blood (revealing hypoglycemia)? 
$$\esiste \finally (Glycemia, 0)$$
 \item [Causality:] Does assimilation of sweeteners cause hypoglycemia? 
$$\begin{array}{l}
\esiste \finally [((Sugar, 1) \vee (Aspartame, 1)) \wedge (Glycemia, 1)] \rightarrow  \all \finally (Glycemia,2)   
  \end{array}
 $$
\end{description}

For this formula we  highlight the different mode-of-action depending on the absorption of sugar or aspartame.

In the case of the assimilation of sugar, it induces an increase of the production of insulin and an augmentation of the blood glucose levels. Nonetheless the levels of insulin produced are not enough to cause the glycemia to drop and the formula is satisfied.
$$
\begin{array}{l}
(Sugar, 1), (Aspartame, 0), (Glycemia, 1), (Insulin, 0), (Glucagon, 0) \xrightarrow{\alpha_1}\\
(Sugar, 1), (Aspartame, 0), \mathbf{(Glycemia, 2)}, (Insulin, 1), (Glucagon, 0) 
\end{array}
 $$
 
In the case of the assimilation of aspartame but not sugar, it causes only an increase of insulin. Unfortunately, this increment is sufficient to induce a decrease of blood glucose levels thus contradicting the formula above. 
This illustrates the toxic behaviour caused by aspartame.
$$
\begin{array}{l}
 (Sugar, 0), (Aspartame, 1), (Glycemia, 1), (Insulin, 0), (Glucagon, 0) \xrightarrow{\alpha_2} \\ 
 (Sugar, 0), (Aspartame, 1), (Glycemia, 1), (Insulin, 1), (Glucagon, 0) \xrightarrow{\alpha_7} \\
 (Sugar, 0), (Aspartame, 0), \mathbf{(Glycemia, 0)}, (Insulin, 1), (Glucagon, 0)
\end{array}
 $$

The \andy network, its implementation in Snoopy and formulas for Charlie analyser as well as some results can be found in \cite{webpage}.

%
%

\section{Related works}\label{sec:related}

From a technical point of view, the closest related work is on reaction
systems \cite{DBLP:journals/ijfcs/BrijderEMR11} or their Petri net
representation \cite{koutnyreaction}. Although we use a similar
definition for activity, the semantics we have proposed is inherently
different: in \cite{DBLP:journals/ijfcs/BrijderEMR11} all enabled
reactions occur in one step while we consider two forms of activity evolution: simultaneous  for \obligatory activities and
interleaved for \potential ones.
Furthermore, we have introduced discrete abstract  levels that, to the
best of our knowledge, are not taken into account in reaction systems. 
Also, we have presented a more elaborated notion of time that governs
both entities and activities. 
In  \cite{DBLP:conf/birthday/BrijderER11} the authors consider an
extension of reaction systems with duration but it concerns only decay
and not  duration of activities. Furthermore the decay is  referred to
the number of steps and not to the discrete time progression as in our case.\looseness=-1  

From the Petri net modelling point of view, our representation of time is considerably different from the approaches traditionally used in time and timed Petri nets 
(\cite{DBLP:journals/tcs/CeroneM99} presents a survey with insightful
comparison of the different approaches).
The main difference lies on the fact that the  progression of time is implicit and  external to the system. By contrast, in our proposal we have assumed the presence of  an explicit way of incrementing duration (modelled by synchronised counters). 
This is also  different from the notion of timestamps introduced in  \cite{timestamp} that again refers to an implicit notion of time.
Indeed,  our approach is conceptually closer to Petri nets with causal time  \cite{DBLP:journals/entcs/ThanhKP02} for the presence of an explicit transition for time progression. Nevertheless, in our approach time cannot be suspended under the influence of the environment (as is the case in \cite{DBLP:journals/entcs/ThanhKP02}).\looseness=-1

Time is featured also in  \cite{DBLP:journals/procedia/CometBDDMC12} based on the Thomas framework and using  delays as parameters  to be discovered.
A similar approach  but based on timed automata is presented in  \cite{DBLP:conf/formats/BattSM07} and \cite{DBLP:conf/cmsb/SiebertB06}. 
 In all these proposals, delays are reminiscent of our durations for \obligatory activities but our semantics is more general as it includes the treatment of decay and \potential activities.
 
In a broader sense, our work could also be related to P-systems
\cite{DBLP:journals/ijfcs/PaunPRS11,DBLP:journals/fuin/KleijnK11} or the
$\kappa$-calculus \cite{DBLP:journals/tcs/DanosL04} that  describe the
evolution of cells through rules. 
Both these approaches are mainly oriented to simulation while we are
interested in  verification aspects. Moreover, always related to the
modelling in Petri nets but with a different aim, levels have been used
in qualitative approaches to address problems related to the
identification of steady states in genetic networks such as in
\cite{DBLP:journals/nc/ChaouiyaNRT11}. 
Nevertheless these contributions
abstract away from time related aspects that are instead central in our
proposal.   It is also interesting to notice that our formalism adds a new node in between timed and qualitative regions in the classification in   \cite{DBLP:conf/apn/HeinerG11}.

\section{Final remarks}\label{sec:concl}

We have introduced \andy 
to model
time-dependent activity-driven systems, which  consist of a set of entities
present in the environment at a given level. Entities can age as time
passes by and their level is governed by a set of \obligatory and \potential activities with duration. 
The \andy network is compact and provides an intelligible
representation of the described activities. Moreover, it is easily scalable as the size of the network grows linearly with
the number of added entities and activities and the rule-based architecture of guarantees a modular construction of the systems.
We have shown that, despite the ability of modelling timed (thus infinite) systems, \andy networks have finite state space. 

Moreover we have discussed how toxicity problems can be addressed using our formalism. To the 
best of our knowledge this is the first attempt to give a general methodology, a systemic approach of toxicity analysis rather than a particular technique to verify them. We have exemplified our approach to toxicity on the blood glucose regulation example.

As the semantics of \andy is given in terms of high-level Petri nets, \andy networks can be easily used as overlay of existing implemented tools.
%
%
We have prototyped a
tool to simulate and build the state space of an \andy network 
using the SNAKES
toolkit and Snoopy/Charlie. 
The first results \cite{webpage} are very positive,  \andy is particularly suited to describe biological systems, in particular regulatory networks and pathways whose formalisation is based on rules (activities).

 We plan to develop the \andy
approach along several directions. We want to extend the decay function to arbitrary maps in $\setlev$. The
problem is to generalise the notion of increase and decrease in a
relevant way for the application. More structure can also be given to
the notion of entity. Characterising an entity by several attributes
(instead of only one level) raises the question of the causal
relationships between these attributes and their update
strategy. Another aspect that will be addressed in the future is to
consider dynamic models where entities are created and deleted, as
 in~\cite{Giavitto2012}. 
Finally, it would certainly be interesting to find a systematic way for approximating stochastic rates into activities duration and decays.

%

 \paragraph{\bf Acknowledgements.}
 We thank the anonymous reviewers of PNSE and BioPPN 2014 for their comments on the very first version of this paper. We are grateful to M. Heiner,  F. Pommereau, Y.-S. Le Cornec and A. Finkelstein for useful suggestions and discussions. 

\bibliographystyle{fundam}
\bibliography{biblio}

\appendix
\section{Encoding into Timed Automata} \label{app:timed}

In Section \ref{sec:reaction} we have hinted at the construction of a timed automaton coming directly from the marking graph of an  \andy network. Here we present another encoding that is syntax-driven which is not straightforward because of the management
of \obligatory rules. Unfortunately, even if the obtained automata is smaller, we cannot avoid it to be exponential at least in the number of activities.
We conclude this section showing that the semantics of \andy in terms of timed
automata is equivalent to the semantics in terms of high-level Petri
nets.

\paragraph{\bf Timed automata.}

A timed automaton is an annotated directed (and connected) graph, 
with an initial node and provided with a finite set of non-negative real 
variables called \emph{clocks}. 
Nodes (called \emph{locations}) are annotated with \emph{invariants} 
(predicates allowing to enter or stay in a location).
Arcs are annotated with \emph{guards}, 
\emph{communication labels}, and possibly with some clock \emph{resets}. 
Guards are conjunctions of elementary predicates of the form  
$x~\op~c$, where $\op\in\{>,\geq,=,>,\leq\}$ where $x$ is a clock and 
$c$ a (possibly parameterised) positive integer constant.
As usual, the empty conjunction is interpreted as true. 
The set of all guards and invariant predicates will be denoted by $G$. 


\begin{definition}\label{def:ta} A \emph{timed automaton} $\TA$
is a tuple $(L,l^0,X,\Sigma,\arcs,\inv)$, where 
\begin{itemize}
	\item $L$  is a set of locations with $l^0\in L$ the initial one, $X$ is the set of clocks, 
	\item $\Sigma=\Sigma^s\cup\Sigma^u\cup\Sigma^b$ is a set of communication labels, 
	where $\Sigma^s$ are synchronous, $\Sigma^u$ are synchronous and urgent, 
	and $\Sigma^b$ are broadcast ones,
	\item $\arcs \subseteq L \times (G \cup \Sigma \cup R) \times L$ 
is a set of arcs between locations with a guard in $G$, 
a communication label in $\Sigma\cup\{\epsilon\}$, and a set of clock resets in $R=2^X$;
for all $a\in\Sigma^u$, we require the guard to be $\true$; 
  \item $\inv: L \rightarrow G$ assigns invariants to locations. 
\end{itemize}
\end{definition}


It is possible to define a synchronised product of a set of timed automata 
that work and synchronise in parallel. 
The automata are required to have disjoint sets of locations, but
may share clocks and communication labels which are used for synchronisation. 
We define three communication policies: 
\begin{itemize}
	\item \emph{synchronous} communications through
		labels $a\in\Sigma^s$ that require all the automata having label $a$  
		to synchronise on $a$; 
	\item \emph{synchronous urgent} communications through 
		labels $u\in\Sigma^u$ that are synchronous as above but \emph{urgent} meaning that
		there will be no delay if transition with label $u$ can be taken;
  \item \emph{broadcast} communications through labels $b!,b?\in\Sigma^b$ meaning that
	a set of automata can synchronise if one is emitting;
	notice that, a process can always emit (\eg $b!$) 
	and the receivers ($b?$) must synchronise if they can.
\end{itemize}

The synchronous product $\TA_1\parallel \ldots \parallel \TA_n$ 
of timed automata, where for each $j\in[1,\ldots,n]$, 
$\TA_j = (L_j, l^0_j,X_j,\Sigma_j,\arcs_j,\inv_j)$ and all $L_j$ are 
pairwise disjoint sets of locations is
the timed automaton $\TA=(L, l^0,X,\Sigma,\arcs,\inv)$ such that:
\begin{itemize}
\item $L=L_1\times\ldots\times L_n$ and $l^0=(l^0_1,\ldots,l^0_n)$, 
$X=\bigcup_{j=1}^n X_j$, $\Sigma=\bigcup_{j=1}^n \Sigma_j$,
\item $\forall l=(l_1, \ldots, l_n)\in L\colon \inv(l) = \bigwedge_j \inv_j(l_j)$,
\item $\arcs$ is the set of arcs 
$(l_1, \ldots, l_n) \stackrel{g,a,r}{\longrightarrow} (l'_1, \ldots, l'_n)$
such that (where for each $a\in\Sigma^s\cup\Sigma^u, 
S_a=\{j\mid 1\leq j \leq n, a\in \Sigma^s_j\cup\Sigma^u_j\}$):
	for all $1\leq j \leq n$, if $j\not\in S_a$, then $l_j'=l_j$, otherwise
	there exist $g_j$ and $r_j$ such that 
	$l_j \stackrel{g_j,a,r_j}{\longrightarrow} l_j'\in E_j$; $g=\bigwedge_{j\in S_a} g_j$
	and $r=\bigcup_{j\in S_a} r_j$.
\end{itemize}

The semantics of a synchronous product $\TA_1\parallel \ldots \parallel \TA_n$
is that of the underlying timed automaton $\TA$
(synchronising on synchronous and broadcast communication labels) as recalled below, 
with the following notations. 
A location is a vector $l = (l_1, \ldots, l_n)$. 
We write $l[l'_j/l_j, j\in S]$ to denote the location $l$ in which the 
$j$th element $l_j$ is replaced by $l'_j$, for all $j$ in some set $S$.
A valuation is a function $\nu$ from the set of clocks to the non-negative reals. 
Let $\mathbb{V}$ be the set of all clock valuations, and $\nu_0(x) = 0$ 
for all $x \in X$.  
We shall denote by $\nu\vDash F$ the fact that the valuation $\nu$ 
satisfies (makes true) the formula $F$.
If $r$ is a clock reset, we shall denote by $\nu[r]$ 
the valuation obtained after applying clock reset $r\subseteq X$ to $\nu$; 
and if $d\in\mathbb{R}_{> 0}$ is a delay, 
$\nu+d$ is the valuation such that, for any clock $x\in X$, 
$(\nu+d)(x)=\nu(x)+d$.

The semantics of a synchronous product $\TA_1\parallel \ldots \parallel \TA_n$
is defined as a timed transition system $(S,s_0,\rightarrow)$,
where $S = (L_1 \times,\ldots\times L_n) \times \mathbb{V}$ is the set of states, 
$s_0 = (l^0, \nu_0)$ is the initial state, and 
$\rightarrow \subseteq S \times S$ is the transition relation defined by:
\begin{itemize}	
	\item (sync): $(\bar{l},\nu) \rightarrow (\bar{l'},\nu')$ if 
	there exist arc $l \stackrel{g,a,r}{\longrightarrow} l'\in \arcs$ 
	such that $\nu\vDash g$, $\nu'=\nu[r]$,
	for $S_a=\{j\mid 1\leq j\leq n, a\in\Sigma^s_j\}$, 
	$l'=l[l'_j/l_j, j\in S_a]$, 
	and there is no enabled transition with urgent communication label
	from $(\bar{l},\nu)$;
	
	\item (urgent): as (sync) but $a\in\Sigma^u$ and there is no delay
	if transition with urgent communication label can be taken;
	 
	\item (broadcast): $(\bar{l},\nu) \rightarrow (\bar{l'},\nu')$ if
	there exist an output arc $l_j \stackrel{g_j,b!,r_j}{\longrightarrow} l_j'\in \arcs_j$ 
	and a (possibly empty) set of input arcs of the form
	$l_k \stackrel{g_k,b?,r_k}{\longrightarrow} l_k'\in \arcs_k$ such that 
	for all $k\in K=\{k_1,\ldots,k_m\}\subseteq\{l_1,\ldots,l_n\}\setminus\{l_j\}$,
	the size of $K$ is maximal, $\nu\vDash \bigwedge_{k\in K\cup\{j\}} g_k$,
	$l'=l[l'_k/l_k, k\in K\cup\{j\}]$ and $\nu'=\nu[r_k, k\in K\cup\{j\}]$;
	
	\item (timed): $(l,\nu)\rightarrow (l,\nu+d)$ if $\nu+d\vDash \inv(l)$. 
\end{itemize} 
Here we exemplify timed automata usage: 
consider for instance the network of timed automata $\TA_1$ and $\TA_2$ with synchronous (non urgent) communications only:
\begin{center}
\begin{tikzpicture}[>=latex',xscale=.8, yscale=.6,every node/.style={scale=0.7}]
\node[location,double] at (0,0) (l1) {$\stackrel{l_1}{x < 2}$}; 
\node[location]  at (4,0) (l2) {$\stackrel{l_2}{x < 2}$};
\node[left] at (-4,0) {$\TA_1$};
\node[location,double] at (0,-2) (l3) {$\stackrel{l_3}{\emptyset}$};
\node[location] at (4,-2) (l4) {$\stackrel{l_4}{\emptyset}$};
\draw[->, rounded corners] (l2) -- (5,-.5) -- (5.5,-.5) -- node[midway,right] 
				{$x>0;b;\emptyset$}  (5.5,.5) -- (5,.5)-- (l2) ;

\node[left] at (-4,-2) {$\TA_2$};
\draw[->] (l1) -- (l2) node[midway,above] {$x=1;a;\{x\}$};
\draw[->] (l3) -- (l4) node[midway,above] {$x=1;c;\emptyset$};
\draw[->, rounded corners] (l3) -- (-1,-2.5) -- (-1.5,-2.5) -- node[midway,left] 
				{$\true;a;\{x\}$}  (-1.5,-1.5) -- (-1,-1.5)-- (l3) ;
\end{tikzpicture}
\end{center}
whose behaviour is given by their synchronised product $\TA_1\parallel \TA_2$:
\begin{center}
\begin{tikzpicture}[>=latex',xscale=.9, yscale=.6,every node/.style={scale=0.7}]
\node[location]  at (0,0) (l14) {$\stackrel{(l_1,l_4)}{x < 2}$};
\node[location,double] at (3,0) (l13) {$\stackrel{(l_1,l_3)}{x < 2}$}; 
\node[location] at (6,0) (l23) {$\stackrel{(l_2,l_3)}{x < 2}$};
\node[location] at (9,0) (l24) {$\stackrel{(l_2,l_4)}{x < 2}$};
\draw[->, rounded corners] (l23) -- (5.5,1) -- (5.5,1.5) -- node[midway,above] 
				{$x>0;b;\emptyset$}  (6.5,1.5) -- (6.5,1)-- (l23) ;
\draw[->] (l13) -- (l14) node[midway,above] {$x=1;c;\emptyset$};
\draw[->] (l13) -- (l23) node[midway,above] {$x=1;a;\{x\}$};
\draw[->] (l23) -- (l24) node[midway,above] {$x=1;c;\emptyset$};
\draw[->, rounded corners] (l24) -- (8.5,1) -- (8.5,1.5) -- node[midway,above] 
				{$x>0;b;\emptyset$}  (9.5,1.5) -- (9.5,1)-- (l24) ;
\end{tikzpicture}
\end{center}
and where a possible run is:
\begin{center}
\begin{tikzpicture}[>=latex',xscale=1, yscale=.6,every node/.style={scale=0.7}]
\node at (0,0) (e1) {$[(l_1,l_3);x=0]$}; // t+1
\node at (2,0) (e2) {$[(l_1,l_3);x=1]$}; // a
\node at (4,0) (e3) {$[(l_2,l_3);x=0]$}; // t+.5
\node at (6,0) (e4) {$[(l_2,l_3);x=.5]$}; // b
\node at (8,0) (e5) {$[(l_2,l_3);x=.5]$}; // t+.5
\node at (10,0) (e6) {$[(l_2,l_4);x=1]$}; 
\draw[->] (e1) -- (e2) ;
\draw[->] (e2) -- (e3) ;
\draw[->] (e3) -- (e4) ;
\draw[->] (e4) -- (e5) ;
\draw[->] (e5) -- (e6) ;
\end{tikzpicture}
\end{center}

\paragraph{\bf Encoding into timed automata.}
We are now ready to introduce the encoding of the high-level Petri net formalisation of $\andy$, to this aim we need to add some notation:

 \begin{newnotation}
Let $\Tag=\{\beta_1,\ldots,\beta_n\}$ be the set of all \obligatory activities identifiers from $\Syn$,
ordered alphabetically, i.e., $\beta_i<\beta_j$ if $i<j$. 
We denote by 
\[
\Tag^{\circledast}=\{ seq(h) \mid h\in \mathcal{P}(\Tag) \wedge h\neq\emptyset \} \cup \{ \varepsilon\},
\] 
the set of sequences $seq(h)$ obtained by concatenating the identifiers in non-empty subsets $h$ of $\Tag$, where for each $h=\{\beta_{i_1},\ldots,\beta_{i_m}\mid \forall j,k:~ i_j<i_k \}\in \mathcal{P}(\Tag)$, $seq(h)=\beta_{i_1}\cdots \beta_{i_m}$. We assume that if $\Tag^{\circledast}=\{h_1 \mydots h_k\}$, then the $h_i$'s are ordered by decreasing length and alphabetically in such a way that $h_1=\beta_1\cdots\beta_{|\Tag|}$ and $h_k = \varepsilon$. 

Moreover, $\beta_i=h[i]$ is the identifier at the $i$-position, and  $h=h_1-h_2$ is  the sequence of identifier in $h_1$ without those in $h_2$.
  %
 \end{newnotation}

 The encoding of an $\andy$ network is the synchronised product of one timed automaton for each entity in \entities  together with a set of auxiliary automata that are used to handle \potential and \obligatory activities.  
 The idea is that the global state of an \andy network is divided into its local counterparts represented by state of entities (\ie their levels). Thus for each entity $\entity$ we build a timed automaton $\TA(\entity, \level{\entity})$ which has as many locations as the levels in $\entity$. Auxiliary automata are used to implement the encoding of places $p_{\rho}$ ($\TA(\rho)$ for $\rho \in \Syn \cup \Reac$) and to realise time progression together with \obligatory activities ($\TA_{\tick}$). 
 More formally:
 
 \begin{definition} Given an \andy network $(\entities, \Syn, \Reac)$, with initial expression level $\level{\entity}$ for each $\entity \in \entities$, the corresponding timed automata encoding is 
 \[\enc{(\entities, \Syn, \Reac)} = \prod_{\entity \in \entities} \TA(\entity,\level{\entity}) \parallel \prod_{\alpha \in  \Reac} \TA(\alpha) \parallel \prod_{\beta \in \Syn} \TA(\beta) \parallel \TA_{\tick} \]  
 where $\TA(\entity, \level{\entity})$, $\TA(\alpha)$,$\TA(\beta)$  and $\TA_{\tick}$ are defined next. In the following we assume $ \Tag $ to be the set of identifiers of \obligatory activities in $\Syn$. 
 \paragraph{\bf Entities.}
 $\TA(\entity, \level{\entity}) = (L_{\entity},l^0_{\entity},X_{\entity},\Sigma_{\entity}, \arcs_{\entity} ,\inv_{\entity})$
  where:
  \begin{itemize}
   \item $L_{\entity}=\{l^{\entity}_i \mid i \in [0 \mydots \setlev_{\entity} ]\} \cup \{k_i^h,k_i^{d,h} \mid i \in [0 \mydots \setlev_{\entity} ], h\in \Tag^{\circledast} \}$ with  $l^0_{\entity} = l^{\entity}_{\level{\entity}}$
   \item $X_{\entity}=\{\birth^{\entity}_i, \refr^{\entity}_i \mid i \in [0 \mydots \setlev_{\entity} ]\} \cup \{x_{\entity}\}$
   \item $\Sigma^s_{\entity} =  \{  \alpha \mid \alpha \text{ identifier of an activity in }  \Reac  \}$, $  \Sigma_{\entity}^b =  \Tag^{\circledast}$, $\Sigma_{\entity}^u =  \{\tick h \mid h \in \Tag^{\circledast} \} $ 
   \item $\arcs_{\entity} = \arcs_{\Reac} \cup \arcs_{\Syn}$ where 
$$
\arcs_{\Reac}\! =\! \{ l_j \xrightarrow{g(A_{\alpha}) \wedge g(I_{\alpha}) \wedge x_{\entity} = 0, \alpha, r} l_e \mid \entity_{A_{\alpha}} \leq j < \entity_{I_{\alpha}}, \alpha \in \Reac, \entity \in A_{\alpha} \cup I_{\alpha} \cup R_{\alpha} \}  
$$
with  $j, e, \entity_{A_{\alpha}}$, and $\entity_{I_{\alpha}}$ are levels of $\entity$ defined as follows:
    
    $$\entity_{A_{\alpha}} = \begin{cases}
                \level{a} & \text{if } (\entity, \level{a}) \in {A_{\alpha}}     \\
                0 & \text{otherwise}
               \end{cases}
\qquad 
\entity_{I_{\alpha}} = \begin{cases}
               \level{i} & \text{if } (\entity, \level{i}) \in {I_{\alpha}}\\
                \setlev_{\entity} & \text{otherwise}
               \end{cases}
$$

$$ g(A_{\alpha}) = \begin{cases}
\birth^{\entity}_{\level{a}} \geq \dur{\alpha} & \text{if } (\entity, \level{a}) \in A_{\alpha}\\
           \true & \text{otherwise}
          \end{cases}
\ 
g(I_{\alpha}) = \begin{cases}
\birth^{\entity}_{\level{i}} \geq \dur{\alpha} & \text{if } (\entity, \level{i}) \in I_{\alpha}\\
           \true & \text{otherwise}
          \end{cases}
$$

$$m = \begin{cases}
               \max(0, \min(j+v, \setlev_{\entity}-1)) & \text{if } (\entity, v) \in R_{\alpha} \\
                j & \text{otherwise}
               \end{cases}
$$

$$r = \begin{cases}
        \emptyset & \text{if } (\entity, v) \not \in R_{\alpha}  \\
        \{u^\entity_m,\} \cup \{\birth_x^{\entity} \mid x \in [j+1,m]\} & \text{if } (\entity, v) \in R_{\alpha}  \wedge m-j>0\\
        \{u^\entity_m,\} & \text{if } (\entity, v) \in R_{\alpha}  \wedge m-j=0\\
        \{u^\entity_m,\} \cup \{\birth_x^{\entity} \mid x \in [m+1,j]\} & \text{if } (\entity, v) \in R_{\alpha}  \wedge m-j<0
      \end{cases}
$$
$$
\begin{array}{ll}
 \arcs_{\Syn} = & \{ l_j \xrightarrow{g(d)\wedge g , h?, \emptyset} k_j^{d,h},  \mid  j \in [0\mydots\setlev_{\entity}-1], h\in \Tag^{\circledast}\} \ \cup \\
               & \{ l_j \xrightarrow{\neg g(d)\wedge g , h?, \emptyset} k_j^h,  \mid  j \in [0\mydots\setlev_{\entity}-1], h\in \Tag^{\circledast}\} \ \cup \\
              & \{ k_j^{d,h} \xrightarrow{\true, h', r \cup \{x_{\entity}\} } l_e,  \mid  j \in [0\mydots\setlev_{\entity}-1], h,h' \in \Tag^{\circledast}, h<h' \}\ \cup \\ 
              & \{ k_j^h \xrightarrow{\true, \tick h', r'\cup \{x_{\entity}\} } l_e',  \mid  j \in [0\mydots\setlev_{\entity}-1], h,h' \in \Tag^{\circledast}, h<h' \}
\end{array}
$$
where
$$\begin{array}{lcl}
   g & = & \! \bigwedge_{k=1}^{n} g(\beta_k)  \wedge \bigwedge_{k=1}^{m} \neg g(\beta'_m)\text{ for } h=\beta_1 \cdots \beta_n \text{ and } h_1- h= \beta'_1 \cdots \beta'_m \\
   g(\beta) & = & g'(A_\beta) \wedge g'(I_\beta) \text{ and } \beta= \activ{A_\beta}{I_\beta}{\dur{\beta}}{R_\beta}\\
   g(d) & = & u_j > \life_{\entity}(j)
  \end{array}
$$

$$ g'(A_\beta) = \begin{cases}
j \geq \level{a} \wedge \birth^{\entity}_{\level{a}} \geq \dur{\beta} & \text{if } (\entity, \level{a}) \in A_\beta\\
           \true & \text{otherwise}
          \end{cases}
$$
$$
g'(I_\beta) = \begin{cases}
j<\level{i} \wedge \birth^{\entity}_{\level{i}} \geq \dur{\beta} & \text{if } (\entity, \level{i}) \in I_\beta\\
           \true & \text{otherwise}
          \end{cases}
$$
$$
\begin{array}{lcl}
m & = & \max(0, \min(\sum_{i\in[1 \mydots n]} f(h[i]') +j -1, \setlev_{\entity}-1)) \\
m' & = & \max(0, \min(\sum_{i\in[1\mydots n]} f(h[i]')+j, \setlev_{\entity}-1)) \\
 & & \mbox{ where } f(h[i]) = \begin{cases}
        v & \text{if }  (\entity, v) \in R_{\beta_i} \\
        0 & \text{otherwise} 
       \end{cases}
\end{array}
$$
\[
\begin{array}{ll}
r = &\begin{cases}
        \{u^\entity_m\} \cup \{\birth_x^{\entity} \mid x \in [j+1,m]\} & \text{if }  m-j>0\\
        \{u^\entity_m\} & \text{if }   m-j=0 \\
        \{u^\entity_m\} \cup \{\birth_x^{\entity} \mid x \in [m+1,j]\} & \text{if }   m-j<0     
      \end{cases}
      \\
r' = &\begin{cases}
        \{u^\entity_{m'}\} \cup \{\birth_x^{\entity} \mid x \in [j+1,m']\} & \text{if }  m'-j>0\\
        \{u^\entity_{m'}\} & \text{if }  \entity \in h'  \wedge m'-j=0 \\
        \{u^\entity_{m'}\} \cup \{\birth_x^{\entity} \mid x \in [m'+1,j]\} & \text{if }   m'-j<0 \\
        \emptyset & \text{if }  \entity \notin h' 
      \end{cases}
\end{array}
     \]
where $\entity \in h$ denotes formula: $\exists \beta_k=\activ{A_{\beta_k}}{I_{\beta_k}}{\dur{\beta_k}}{R_{\beta_k}}$ s.t. $h=\beta_1\cdots \beta_n, 1\leq k \leq n \wedge (\entity, v) \in R_{\beta_k}$ 
    
   \item $Inv_{\entity}(l^{\entity}_i) = \refr^{\entity}_i \leq \life_{\entity}(i)$ for all $i \in [0 \mydots \setlev_{\entity}]$
   
  \end{itemize}

  \paragraph{\bf \Potential activity.}
$\TA(\alpha) = (L_{\alpha},l^0_{\alpha},X_{\alpha},\Sigma_{\alpha},\arcs_{\alpha},\inv_{\alpha})$  for $\alpha \in \Reac$ where
  \begin{itemize}
  \item $L_{\alpha}=\{l_\alpha\}$,  $l^0_{\alpha} = \alpha$, $X_{\alpha}= \{w_{\alpha}\}$,  $\Sigma^s_{\alpha} = \{ \alpha \} $
  \item $\arcs_{\alpha} = \{ l_\alpha \xrightarrow{w_{\alpha} \geq  \dur{\alpha}, \alpha, \{ w_{\alpha}\}} l_\alpha \}$
  \item $\inv_{\alpha}(l_{\alpha}) = \true$.
  \end{itemize}
  
  \paragraph{\bf \Obligatory activity.} $\TA(\beta) = (L_{\beta},l^0_{\beta},X_{\beta},\Sigma_{\beta},\arcs_{\beta},\inv_{\beta})$  for $\beta \in \Syn$ where
  \begin{itemize}
   \item $L_{\beta}=\{l_\beta, l'_\beta\}$,  $l^0_{\beta} = l_\beta$,  $X_{\beta}= \{w_{\beta}\}$,  $\Sigma_{\beta}^b =  \Tag^{\circledast}$,  $
\Sigma_{\beta}^u =  \{ \tick h \mid h\in \Tag^{\circledast}\} $
   \item $\arcs_{\beta} = \{ l_\beta \xrightarrow{w_{\beta} \geq  \dur{\beta}, h?, \emptyset} l'_\beta \mid h\in \Tag^{\circledast}, \beta\in h \} \cup \{ l'_\beta \xrightarrow{\true, \tick h, \{w_{\beta}\}} l'_\beta \mid h\in \Tag^{\circledast}, \beta\in h \}$
   \item $\inv_{\beta}(l_{\beta}) = \true$ and $\inv_{\beta}(l'_{\beta}) = \true$.
  
  \end{itemize}
  
  \paragraph{\bf Time.}  $\TA_{\tick} = (L_{\tick},l^0_{\tick},X_{\tick},\Sigma_{\tick},\arcs_{\tick},\inv_{\tick})$ where
  \begin{itemize}
   \item $L_{\tick}=\{l_h \mid h \in \Tag^{\circledast}\} \cup \{l_{\bot}\}$,  $l^0_{\tick} = l_{h_1}$,  $X_{\tick}= \{x\}$, $\Sigma_{\tick}^b =  \Tag^{\circledast}$, $ 
\Sigma_{\tick}^u =  \{ \tick h \mid h\in \Tag^{\circledast}\} $
   \item $\arcs_{\tick} =\begin{array}{l}
 \{l_{h_i} \xrightarrow{x=1, h_i!, \emptyset } l_{h_i+1}, l_{h_i+1} \xrightarrow{\true, \tick h_i, \{x\} } l_{h_1} \mid h,  i \in [1\mydots n-1]   \}  \\
\cup \ \{ l_{h_n} \xrightarrow{x=1, h_n!, \emptyset } l_{\bot}, l_{\bot} \xrightarrow{\true, \tick\varepsilon, \{x\} } l_{h_1}   \}
                  \end{array}
$
   \item $\inv_{\tick}(l) = \true$ for all $l \in L_{\tick}$.
  
  \end{itemize}

 \end{definition}

  \begin{theorem}
The above encoding of  \andy network $(\entities, \Syn, \Reac)$
is correct and complete.\looseness=-1
\end{theorem}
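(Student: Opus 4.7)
The plan is to prove the equivalence by exhibiting a weak time-abstract bisimulation $\sim$ between the marking graph of the Petri net of Definition \ref{def:rs} and the timed transition system of the synchronised product $\enc{(\entities, \Syn, \Reac)}$. From such a $\sim$, together with the observation that both systems expose the same observable data (the tuple of entity levels and the counters $\birth, \refr, w$), correctness (every Petri net run is simulated by the TA product) and completeness (every TA run corresponds to a Petri net run, modulo finitely many intermediate internal states) follow at once.

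First, I would define $\sim$ by matching observable data. A marking $M$ with token $\tuple{\bool_\entity, \refr_\entity, \birth_\entity}$ at $p_\entity$ and token $w_\rho$ at $p_\rho$ corresponds to the configuration $(\bar{l}, \nu)$ in which each entity automaton sits at $l^\entity_{\bool_\entity}$, each $\TA(\alpha)$ at $l_\alpha$, each $\TA(\beta)$ at $l_\beta$, $\TA_{\tick}$ at $l_{h_1}$, and the clock valuation $\nu$ satisfies $\nu(\birth^\entity_i) = \birth_\entity[i]$, $\nu(\refr^\entity_i) = \refr_\entity$ when $i = \bool_\entity$ and $0$ otherwise, $\nu(w_\rho) = w_\rho$, and $\nu(x) = 0$. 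Intermediate TA configurations (where $\TA_{\tick}$ has left $l_{h_1}$, or some entity sits in an auxiliary location $k_j^h$ or $k_j^{d,h}$) are treated as internal steps of the simulation and are not related to any Petri net marking.

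Next, I would proceed by case analysis on the kind of Petri net transition. For the firing of $t_\alpha$ with $\alpha \in \Reac$, the matching TA move is the synchronous communication on label $\alpha$ between $\TA(\alpha)$ and all entity automata $\TA(\entity,\cdot)$ with $\entity \in A_\alpha \cup I_\alpha \cup R_\alpha$; a direct check of guards and resets shows that the conditions of Definition \ref{def:enabledactivity} are met exactly when $L(t_\alpha)$ holds, and that the post-update agrees with $C_{r+}, C_{r0}, C_{r-}$. For the firing of $t_c$, a single tick of the Petri net must be matched by a full cycle of $\TA_{\tick}$ from $l_{h_1}$ back to $l_{h_1}$, during which $\TA_{\tick}$ broadcasts each $h_i \in \Tag^{\circledast}$ in turn; entities whose current configuration satisfies the associated guard $g$ move through $k_j^h$ or $k_j^{d,h}$ and are then flushed to the new level by the urgent $\tick h_i$ transition that encodes the decay step and the result update of Algorithm \ref{fig:calc}.

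The principal obstacle is the faithful simulation of the simultaneous execution of all enabled obligatory activities via broadcasting. The key lemma to establish is that, at any Petri net state, there is a unique $h^{\star} \in \Tag^{\circledast}$ equal to the set of currently enabled obligatory activities, and that exactly the broadcast on $h^{\star}$ triggers entity updates, while broadcasts on every other $h \neq h^{\star}$ are vacuously absorbed because the guard $g$ on the $l_j \to k_j^{h}$ arcs forces $h$ to coincide with the enabled set. Combined with urgency of the $\tick h^{\star}$ transition, this pins down the aggregate effect of one $\TA_{\tick}$ cycle as precisely the sum of contributions of the $\beta_k \in h^{\star}$, matching the update loops of Algorithm \ref{fig:calc}. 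Once this invariant and the uniqueness of $h^{\star}$ are proven, forward and backward simulation follow by a straightforward induction on the length of traces, closing the bisimulation.
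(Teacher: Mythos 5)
Your overall strategy --- a state correspondence mapping each marking $M$ to a configuration of the product (entity levels to locations $l^{\entity}_{\bool_\entity}$, the counters $\birth,\refr,w$ to clock values), followed by an induction on run length with a case analysis distinguishing $t_\alpha$ from $t_c$, and with the intermediate configurations of a $\TA_{\tick}$-cycle treated as internal steps --- is essentially the paper's own argument, merely packaged as a weak bisimulation. The $t_\alpha$ case is handled identically in both.

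The gap is in your key lemma for the $t_c$ case. You assert that there is a unique $h^{\star}$ equal to the set of enabled \obligatory activities, that exactly the broadcast on $h^{\star}$ triggers the entity moves, and that every other broadcast is vacuously absorbed ``because the guard $g$ on the $l_j \to k_j^{h}$ arcs forces $h$ to coincide with the enabled set.'' This misreads the encoding and the lemma is false as stated. The guard $g$ is built from $g(\beta)=g'(A_\beta)\wedge g'(I_\beta)$, which evaluates to $\true$ whenever $\entity\notin A_\beta\cup I_\beta$; hence the conjunct $\neg g(\beta')$ fails for every activity $\entity$ is not involved in, and entity $\entity$ can accept exactly one broadcast, namely the one labelled by its \emph{local} set $h_\entity$ of activities it does not itself disable. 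This set always contains, and in general strictly contains, the globally enabled set, so different entities commit on different broadcasts during the descending sweep $h_1,h_2,\ldots$, and none of them need react to the broadcast of $h^{\star}$ itself. Moreover the refractory condition $w_\beta\geq\dur{\beta}$ is checked only by $\TA(\beta)$, not by any entity automaton, so even the intersection of the entities' local sets is not yet the enabled set. The invariant you actually need --- and which the paper's sketch describes --- is two-staged: after the broadcast phase each $\TA(\entity,\cdot)$ sits in $k_j^{h_\entity}$ or $k_j^{d,h_\entity}$ and each refractory-ready $\TA(\beta)$ sits in $l'_\beta$, and the unique urgent label on which \emph{all} automata can jointly synchronise is $\tick h^{\star}$ with $h^{\star}=\bigcap_\entity h_\entity\cap\{\beta\mid w_\beta\geq\dur{\beta}\}$; it is this urgent synchronisation, not the broadcast, that selects the maximal enabled set, applies the summed updates of Algorithm \ref{fig:calc}, and (via the guards $x=1$ and the urgency) pins the whole cycle to exactly one time unit so that the clock values again match the Petri net counters. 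Your induction step for $t_c$ needs to be rebuilt around this agreement lemma; as written it would not go through.
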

\begin{proof}[Sketch]
 Follows by induction on the length of the run and from a case analysis on the transition performed. 
 
Some intuitions on the proof follows. For each entity $\entity$, the corresponding marking of place $p_{\entity}$, $M(p_{\entity})=\tuple{\lev_{\entity}, \refr_{\entity}, \birth_{\entity}}$, in the Petri net representation is encoded by    
 the state (location $l^{\entity}_{\lev_{\entity}}$ and valuations of clocks variables $\refr^{\entity}_{\lev_{\entity}}, \birth^{\entity}_i$ for $i \in [0\mydots\setlev_{\entity} -1]$) of each timed automaton $\TA(\entity, \level{\entity})$. Marking of places $p_{\rho}$ (for $\rho \in \Syn \cup \Reac$) is given by the valuation of clock $w_{\rho}$ in the corresponding timed automaton $\TA(\rho)$.
 
 Each transition of the Petri net is encoded by (a series of) timed automata arcs.
 For each transition $t_{\alpha}$ (corresponding to  \potential activity  $\alpha$) involving $\entity$ there is a (synchronous) arcs in the timed automaton $\TA(\entity, \level{\entity})$ whose guard describes its role in the activity (activator, inhibitor or result). Clock $w_{\alpha}$ in   $\TA(\alpha)$ implements the constraint that the activity $\alpha$ is performed at most once in the interval $\dur{\alpha} $:  $w_{\alpha} \geq \dur{\alpha}$. 
 This way, the synchronous product of all automata reconstructs the full guard of the activity $\alpha$ and  exactly one  transition in the synchronised product of automata corresponds to the firing of transition $t_{\alpha}$. The state reached after this transition coincides with the corresponding marking in the Petri net.
 
 Transition $t_c$ is trickier as time progression causes  decay but more importantly the simultaneous action of \obligatory activities.
 Notice that \obligatory activities concern the global state of an \andy network (the maximal set of enabled \obligatory activities has to be performed each time $t_c$ fires) but each sub-automaton of the synchronised automaton as only a partial/local information.  
 That is why, we need to introduce the auxiliary automaton $\TA_{\tick}$ that coordinates and gathers partial information from all  other automata.
 Thus, the implementation of $t_c$ has two phases. The first one gathers partial information, performs the selection of the largest set of enabled \obligatory activities and forces the time to progress in a discrete fashion;  the second phase completes the time progression and synchronises all timed automata communicating the chosen maximal set of \obligatory activities.
 Both phases are initiated by   automaton $\TA_{\tick}$ which has two types of arcs: broadcast ones for the first phase and urgent synchronous ones for the second (see Figure \ref{fig:tick}).
\begin{figure}[t]
\centering
\begin{tikzpicture}[>=latex',xscale=1, yscale=.6,every node/.style={scale=0.7}]
\node[location,double] at (0,0) (l1) {$l_{\beta_1\beta_2}$}; 
\node[location]  at (2,0) (l2) {$l_{\beta_1}$};
\node[location]  at (4,0) (l3) {$l_{\beta_2}$};
\node[location]  at (6,0) (l4) {$l_{\epsilon}$};
\node[location]  at (8,0) (l5) {$l_{\bot}$};
\draw[->] (l1) -- node[midway,above] {$\beta_1\beta_2!$} (l2) ;
\draw[->] (l2) -- node[midway,above] {$\beta_1!$} (l3) ;
\draw[->] (l3) -- node[midway,above] {$\beta_2!$} (l4) ;
\draw[->] (l4) -- node[midway,above] {$\epsilon!$} (l5) ;
\draw[->, rounded corners] (l2) -- node[midway,right] {$\tick\beta_1\beta_2$}(2,1.5) -- (1,1.5) --  (l1) ;
\draw[->, rounded corners] (l3) -- node[midway,right] {$\tick\beta_1$}(4,2) -- (1,2) --  (l1) ;
\draw[->, rounded corners] (l4) -- node[midway,right] {$\tick\beta_2$}(6,2.5) -- (1,2.5) --  (l1) ;
\draw[->, rounded corners] (l5) -- node[midway,right] {$\tick\epsilon$}(8,3) -- (1,3) --  (l1) ;

\end{tikzpicture}
\caption{The shape of timed automaton $\TA_\tick$ for $\Tag^\circledast=\{\beta_1\beta_2,\beta_1,\beta_2,\epsilon\}$, where $\tick\beta_1\beta_2$, $\tick\beta_1$, $\tick\beta_2$, $\tick\epsilon$ are all synchronous urgent communication labels.}
\label{fig:tick}
\end{figure}
More precisely, $\TA_{\tick}$ progressively interrogates the entities timed automata $\TA(\entity, \level{\entity})$ and the \obligatory activities automata $\TA(\beta_i)$ to ``compute'' for each automaton the maximal set of enabled \obligatory activities. This is obtained through broadcast arcs labelled with sequences of \obligatory activities identifiers $h\in\Tag^\circledast$, from the longest ($h=seq(\Tag)=\beta_1\cdots\beta_n$) to the shortest ($h=\epsilon$, \ie no \obligatory activity is enabled).
 As broadcast is a non blocking transition and because of the ordering on sequences in $\Tag^\circledast$, each entity automaton chooses its maximal set of \obligatory activities it is involved in. If it is necessary, it also performs decay. When all automata $\TA(\entity, \level{\entity})$ and $\TA(\beta_{i})$ have agreed on some sequence $h=\beta_{i_1} \mydots \beta_{i_m}$ (in the worst case $h$ is empty) the first phase is completed and $\{ \beta_{i_1},  \mydots,  \beta_{i_m}\}$ is the largest set of enabled \obligatory activities. The second phase is then implemented with an urgent synchronous arc synchronising all automata: $\TA_{\tick}$, $\TA(\entity, \level{\entity})$, for each $\entity \in \entities$, and $\TA(\beta_i)$, for $i\in [i_1 \mydots i_m]$.
 Notice that guards on broadcast transitions constraint the clocks to progress by one time unit at once.   
 As a consequence, at the end of the two phase algorithm, timed automata of entities and of places $p_{\rho}$ together with the  corresponding clocks valuations exactly encode the marking reached after firing $t_c$. 
\end{proof}

\end{document}